\PassOptionsToPackage{unicode}{hyperref}
\PassOptionsToPackage{hyphens}{url}
\PassOptionsToPackage{dvipsnames,svgnames,x11names}{xcolor}
\documentclass[
  12pt]{article}

\usepackage{amsmath,amssymb,amsthm,mathtools}
\usepackage{algorithm, algorithmicx, algpseudocode}
\usepackage{rotating}
\usepackage{iftex}
\ifPDFTeX
  \usepackage[T1]{fontenc}
  \usepackage[utf8]{inputenc}
  \usepackage{textcomp} 
\else 
  \usepackage{unicode-math}
  \defaultfontfeatures{Scale=MatchLowercase}
  \defaultfontfeatures[\rmfamily]{Ligatures=TeX,Scale=1}
\fi
\usepackage{lmodern}
\ifPDFTeX\else  
\fi
\IfFileExists{upquote.sty}{\usepackage{upquote}}{}
\IfFileExists{microtype.sty}{
  \usepackage[]{microtype}
  \UseMicrotypeSet[protrusion]{basicmath} 
}{}
\makeatletter
\@ifundefined{KOMAClassName}{
  \IfFileExists{parskip.sty}{%
    \usepackage{parskip}
  }{
    \setlength{\parindent}{0pt}
    \setlength{\parskip}{6pt plus 2pt minus 1pt}}
}{
  \KOMAoptions{parskip=half}}
\makeatother
\usepackage{xcolor}
\makeatletter
\ifx\paragraph\undefined\else
  \let\oldparagraph\paragraph
  \renewcommand{\paragraph}{
    \@ifstar
      \xxxParagraphStar
      \xxxParagraphNoStar
  }
  \newcommand{\xxxParagraphStar}[1]{\oldparagraph*{#1}\mbox{}}
  \newcommand{\xxxParagraphNoStar}[1]{\oldparagraph{#1}\mbox{}}
\fi
\ifx\subparagraph\undefined\else
  \let\oldsubparagraph\subparagraph
  \renewcommand{\subparagraph}{
    \@ifstar
      \xxxSubParagraphStar
      \xxxSubParagraphNoStar
  }
  \newcommand{\xxxSubParagraphStar}[1]{\oldsubparagraph*{#1}\mbox{}}
  \newcommand{\xxxSubParagraphNoStar}[1]{\oldsubparagraph{#1}\mbox{}}
\fi
\makeatother

\usepackage{longtable,booktabs,array}
\usepackage{calc} 
\usepackage{etoolbox}
\makeatletter
\patchcmd\longtable{\par}{\if@noskipsec\mbox{}\fi\par}{}{}
\makeatother
\IfFileExists{footnotehyper.sty}{\usepackage{footnotehyper}}{\usepackage{footnote}}
\makesavenoteenv{longtable}
\usepackage{graphicx}
\makeatletter
\def\maxwidth{\ifdim\Gin@nat@width>\linewidth\linewidth\else\Gin@nat@width\fi}
\def\maxheight{\ifdim\Gin@nat@height>\textheight\textheight\else\Gin@nat@height\fi}
\makeatother
\setkeys{Gin}{width=\maxwidth,height=\maxheight,keepaspectratio}
\makeatletter
\def\fps@figure{htbp}
\makeatother

\makeatletter
\@ifpackageloaded{caption}{}{\usepackage{caption}}
\AtBeginDocument{%
\ifdefined\contentsname
  \renewcommand*\contentsname{Table of contents}
\else
  \newcommand\contentsname{Table of contents}
\fi
\ifdefined\listfigurename
  \renewcommand*\listfigurename{List of Figures}
\else
  \newcommand\listfigurename{List of Figures}
\fi
\ifdefined\listtablename
  \renewcommand*\listtablename{List of Tables}
\else
  \newcommand\listtablename{List of Tables}
\fi
\ifdefined\figurename
  \renewcommand*\figurename{Figure}
\else
  \newcommand\figurename{Figure}
\fi
\ifdefined\tablename
  \renewcommand*\tablename{Table}
\else
  \newcommand\tablename{Table}
\fi
}
\@ifpackageloaded{float}{}{\usepackage{float}}
\floatstyle{ruled}
\@ifundefined{c@chapter}{\newfloat{codelisting}{h}{lop}}{\newfloat{codelisting}{h}{lop}[chapter]}
\floatname{codelisting}{Listing}

\makeatother
\makeatletter
\@ifpackageloaded{caption}{}{\usepackage{caption}}
\@ifpackageloaded{subcaption}{}{\usepackage{subcaption}}
\makeatother

\ifLuaTeX
  \usepackage{selnolig}  
\fi
\usepackage[]{natbib}
\usepackage{bookmark}

\IfFileExists{xurl.sty}{\usepackage{xurl}}{} 
\hypersetup{
  pdftitle={Title},
  pdfauthor={Author 1; Author 2},
  pdfkeywords={3 to 6 keywords, that do not appear in the title},
  colorlinks=true,
  linkcolor={blue},
  filecolor={Maroon},
  citecolor={Blue},
  urlcolor={Blue},
  pdfcreator={LaTeX via pandoc}}

\newcommand{\anon}{1}

\DeclareMathOperator*{\argmin}{arg\,min}
\DeclareMathOperator{\MAD}{MAD}
\DeclareMathOperator{\med}{med}

\newcommand{\defeq}{\vcentcolon=}

\newcommand{\bSigma}{\boldsymbol{\Sigma}}
\newcommand{\bx}{\mathbf{x}}
\newcommand{\bX}{\mathbf{X}}
\newcommand{\bmu}{\boldsymbol{\mu}}
\newtheorem{theorem}{Theorem}[section]
\newtheorem{remark}[theorem]{Remark}
\newtheorem{assumption}{Assumption}
\newtheorem{lemma}{Lemma}

\begin{document}

\def\spacingset#1{\renewcommand{\baselinestretch}%
{#1}\small\normalsize} \spacingset{1}


\if1\anon
{
  \title{\bf A unified approach to outlier identification for mixed-type data}
  \author{Efthymios Costa\thanks{The work of the first author was supported by the UK Engineering and Physical Sciences Research Council (EPSRC) under Grant EP/S023151/1: EPSRC Centre for Doctoral Training in Modern Statistics and Statistical Machine Learning.
The work of the second author was supported by the Research Project PRIN 2022 `Latent variable models and dimensionality reduction methods for complex data'', project code 20224CRB9E.}\hspace{.2cm}\\ 
    Department of Mathematics, Imperial College London\\
    efthymios.costa17@imperial.ac.uk, ORCID 0000-0002-1267-1165\\
    and \\
    Christian Hennig\\
    Department of Statistical Sciences ``Paolo Fortunati'', University of Bologna \\
christian.hennig@unibo.it, ORCID 0000-0003-1550-5637
}
  \maketitle
} \fi

\if0\anon
{
  \bigskip
  \bigskip
  \bigskip
  \begin{center}
    {\LARGE\bf A unified approach to outlier identification for mixed-type continuous/ordinal data}
\end{center}
  \medskip
} \fi

\bigskip
\begin{abstract}
  We present an outlier identification method for mixed type data sets comprising continuous and ordinal variables. We define outliers based on using a multivariate Gaussian distribution as reference distribution for non-outliers, with a latent Gaussian assumed for ordinal variables. The proposed algorithm is based on the robust Minimum Covariance Determinant estimator for estimating the parameters of the multivariate Gaussian for the non-outliers. This is extended to account for the fact that the full Gaussian information underlying the ordinal variables is not observed. A breakdown theorem shows that replacing observations will noty stop extreme enough outliers from being identified. The effectiveness of our approach is demonstrated via simulations on synthetic data with various types of contamination, achieving high detection and low false positive rates. Practical relevance is illustrated through an application to Airbnb listing data containing both continuous and ordinal attributes. 
\end{abstract}

\noindent%
{\it Keywords:} outlier identification, Minimum Covariance Determinant, robustness, ordinal data, latent Gaussian, breakdown point
\vfill

\newpage
\spacingset{1} 

  
\section{Introduction}\label{sec:intro}

Real world data often contain outliers. Outliers can severely undermine statistical analyses, and their detection may also be of interest in its own right. While robust methods exist for purely continuous data, many applications involve variables of mixed types, combining continuous and categorical features. Techniques that can handle such data and are robust to outliers are scarce. In this paper, we extend the Minimum Covariance Determinant estimator (MCD; \cite{rousseeuw1984least}) to the mixed continuous-ordinal setting, and we use it to identify outliers.

Ordinal variables as considered here have a finite number of possible outcome categories. These categories are ordered. Ordinality of such variables is defined by the absence of a meaningful numerical distance between any two categories \citep{stevens1946measurement}. 
Ordinal data frequently arise from subjective rating scales, which are widely in use, for example in psychometry and marketing. Binary variables can be seen as ordinal variables as well (there is no formal distinction between a categorical and an ordinal variable if there are only two categories) and are covered by our approach.
Outlyingness regarding continuous variables at higher scale levels (interval and ratio scales) is connected both intuitively and in most formal outlier identification approaches to numerical distances (e.g., Euclidean or Mahalanobis) and their implied geometry. Such approaches do not apply to ordinal variables in a straightforward way.

\cite{hawkins1980identification} gave an early definition of outliers, describing them as observations that deviate so much from the rest of the data as to arouse suspicion that they have been generated by a different mechanism. Due to its generality, this definition does not make any reference as to how these atypical observations can be identified, nor how a statistical model can effectively mitigate their influence on parameter estimation and inference. Both issues fall within the scope of robust statistics, a field concerned with procedures that are insensitive to deviations from modeling assumptions, unlike classical methods that can be severely distorted by even a small fraction of contamination.  \cite{daviesgather93} argued that outliers should formally be defined with respect to a reference distribution for the non-outliers, as observations in a suitable low probability region of the reference distribution. 

One of the the most prominent robust estimators of multivariate location and scatter for continuous data is the MCD \citep{rousseeuw1984least}. Outliers can be identified as observations having large robust Mahalanobis distances with respect to the MCD \citep{becker1999masking}. The MCD estimates location and scatter based on an optimally homogeneous majority of the data. In this way, it implicitly treats this majority as if it were distributed by a multivariate Gaussian distribution, and observations that lie in a low probability region with respect to this majority are treated as outliers, in line with a definition of outliers in the sense of \cite{daviesgather93}. 

To our knowledge there is currently no version of the MCD that extends to mixed type data. Here we generalize MCD-based outlier identification to mixed type continuous/ordinal data, assuming a multivariate Gaussian distribution for a majority of non-outlying data where ordinal variables are modeled as having been generated from an underlying unobserved latent Gaussian (``ordinal-LG'' in the following). This unifies the definition of outliers based on continuous and ordinal variables.

The remainder of the paper is organized as follows; in Section~\ref{sec:mcd} we describe the MCD and some of its extensions that have been proposed in the literature. We provide a definition of outliers in mixed continuous/ordinal data in Section~\ref{sec:defout}. Section~\ref{sec:extendingmcd} outlines our proposal for extending the MCD for mixed continuous/ordinal data. In Section~\ref{sec:breakdown} we present a theorem that makes sure that, given an original dataset and using our method, extreme enough outliers w.r.t. this dataset are still identified if a certain maximum number of observations are arbitrarily replaced.  
We present some simulations on synthetic data sets with no outliers and with four different types of contamination in Section~\ref{sec:sims}. The proposed approach is applied on a data set of Airbnb listings from London during the weekdays of 2021 in Section~\ref{sec:airbnb}. Section~\ref{sec:conclusion} concludes the paper.

\section{The MCD estimator}\label{sec:mcd}

Since its introduction by \cite{rousseeuw1984least}, the MCD estimator has been used to identify outliers in various sectors, including econometrics \citep{gambacciani_robust_2017}, medical imaging \citep{prastawa_brain_2004}, geophysics \citep{neykov_robust_2007}, and geochemistry \citep{filzmoser_multivariate_2005}, to name a few. In this section, we provide a description of the MCD estimator and illustrate how it is used for outlier detection purposes. 

The MCD estimator was introduced as a highly robust estimator of multivariate location and scatter. It is defined as the mean vector and the covariance matrix of the $h$ observations that yield the covariance matrix with the smallest determinant, where $h$ is a fixed tuning constant so that $n/2 \leq h \leq n$: Given $n$ observations $\mathbf{x} = (\mathbf{x}_1, \ldots, \mathbf{x}_n)^\top$ from an assumed unimodal elliptically symmetric $p$-variate distribution with unknown location and scale parameters $\boldsymbol{\mu}$ and $\boldsymbol{\Sigma}$, the MCD estimator is based on a subset of $h$ data points $\mathcal{H}^\text{MCD}$ satisfying the following:
\begin{equation}\label{eq:hsubset}
    \mathcal{H}^\text{MCD} \defeq \argmin\limits_{\mathcal{H} \subset \{ 1, \ldots, n\} : \lvert \mathcal{H} \rvert = h} \det\left( \mathbf{S}_\mathcal{H}\right).
\end{equation}
$\mathcal{H}^\text{MCD}$, consisting of the $h$ observations whose associated covariance matrix $\mathbf{S}_{\mathcal{H}^\text{MCD}}$ is minimal, is referred to as the MCD $h$-subset (general data subsets of size $h$ are called $h$-subset). The MCD estimator is then defined as follows:
\begin{equation}\label{eq:mcd}
    \hat{\boldsymbol{\mu}}^\text{MCD} \defeq \bar{\mathbf{x}}_{h} =  \frac{1}{h}\sum\limits_{i \in \mathcal{H}^\text{MCD}} \mathbf{x}_i , \quad \hat{\boldsymbol{\Sigma}}^\text{MCD} \defeq \frac{c(h,p)}{h-1}\sum\limits_{i \in \mathcal{H}^\text{MCD}}\left( \mathbf{x}_i - \bar{\mathbf{x}}_h \right)\left( \mathbf{x}_i - \bar{\mathbf{x}}_h \right)^\top.
\end{equation}
The term $c(h,p)$ in \eqref{eq:mcd} is a scaling factor that ensures that the estimated covariance matrix is consistent at the multivariate Gaussian model. It is equal to $c(h, p) = (h/n)/F_{\chi^2_{p+2}}(q_{h/n})$, where $q_{h/n}$ is the $(h/n)$th quantile of a chi-squared random variable with $p$ degrees of freedom and $F_{\chi^2_{p+2}}$ is the cumulative distribution function of a chi-squared random variable with $p+2$ degrees of freedom, see \cite{croux_influence_1999}.

An implicit assumption is that the MCD $h$-subset does not contain outliers (here meaning observations that have the potential to contaminate mean and covariance matrix estimation), so $h$ should be chosen based on assuming the corresponding proportion of the data to be outlier-free. The resulting estimator is a ``trimmed estimator'', as it produces estimates which only depend on a proportion of the original observations. 

Typical choices include $h = \lceil 3n/4 \rceil$ or $h = \lceil n/2 \rceil$. Whereas a smaller value of $h$ implies greater robustness in the sense of a higher breakdown point \citep{HuDeRo18}, larger values allow for larger efficiency and more stability particularly in situations in which the sample size is not very large compared to the number of dimensions \citep[see][for a detailed discussion of the asymptotic properties of the MCD]{butler1993asymptotics, cator2012central}. A desirable property of the MCD estimator is affine equivariance \citep{HuDeRo18}.

The MCD $h$-subset, or rather an approximation that is locally optimal, can be obtained using the FastMCD algorithm proposed by \cite{rousseeuw1999fastmcd}. The FastMCD algorithm relies on the ``concentration step'', commonly referred to as the ``C-step''. This states that starting from an initial $h$-subset, one can compute the sample mean and sample covariance matrix, use these to compute Mahalanobis distances, and then in the next step choose the $h$ observations with the smallest distances. Repeating this procedure until the $h$-subset remains unchanged is guaranteed to not lead to an increase in the determinant at each step. This reduces the problem to ensuring that a good initial $h$-subset is being used. Running the algorithm with several randomly selected initial subsets guarantees that the best $h$-subset is eventually selected. 

The MCD estimator can be used for outlier identification through robust Mahalanobis distances of the observations to the MCD-location, defined as follows:
\begin{equation}\label{eq:robust_mahalanobis}
    \text{RD}(\mathbf{x}_i) \defeq \sqrt{\left(\mathbf{x}_i - \hat{\boldsymbol{\mu}}^{\text{MCD}} \right)^\top \left(\hat{\boldsymbol{\Sigma}}^\text{MCD}\right)^{-1}\left(\mathbf{x}_i - \hat{\boldsymbol{\mu}}^{\text{MCD}}\right)}, \ i = 1,\ldots, n.
\end{equation}
These distances are not affected by observations outside the MCD $h$-subset (other than $\mathbf{x}_i$ itself for which the distance is computed), so that outliers cannot be masked (in the sense of having a low value of $\text{RD}(\mathbf{x}_i)$) by more extreme outliers. 
In order to flag outliers, the asymptotic distribution of the robust distances is used; this is a chi-squared distribution with $p$ degrees of freedom, or an $F$ distribution for smaller sample sizes \citep{becker1999masking}. A large quantile of such a random variable is a suitable threshold that determines whether an observation is an outlier with a specific probability, see Section~\ref{sec:defout}. 


\section{Outliers based on the latent Gaussian approach for mixed type data}\label{sec:defout}
\subsection{Outlier region for a multivariate Gaussian reference distribution}
After reviewing the definition of outliers given by \cite{becker1999masking}
for a multivariate Gaussian reference distribution for continuous data, we extend this definition to mixed type continuous/ordinal data.
 
Following the general approach by \cite{daviesgather93}, \cite{becker1999masking} defined an $\alpha$-outlier region for $p$-variate continuous data based on a multivariate Gaussian reference distribution ${\cal N}_p(\bmu,\bSigma)$ with mean vector $\bmu$ and covariance matrix $\bSigma$ as
\begin{equation}\label{eq:normout}
  \mathrm{out}(\bmu,\bSigma;\alpha)\defeq\{\bx:\ (\bx-\bmu)^\top\bSigma^{-1}(\bx-\bmu)>\chi^2_{p; 1-\alpha}\},
\end{equation}
so that, with data distributed according to ${\cal N}_p(\bmu,\bSigma)$, $P(\mathrm{out}(\bmu,\bSigma;\alpha))=\alpha$. \cite{daviesgather93} proposed to choose $\alpha=\alpha_n$ dependent on the number of observations so that the probability of having at least one outlier in $\mathrm{out}(\bmu,\bSigma;\alpha_n)$ for i.i.d. data from ${\cal N}_p(\bmu,\bSigma)$ is $\beta$ with $\beta$ small, so that
\begin{equation}\label{eq:alphan}
\alpha_n = 1-(1-\beta)^{1/n}, 
\end{equation}
e.g., $\alpha_n=0.0005$ for $\beta=0.05, n=100$. The idea is that flagging observations as outliers that are truly from ${\cal N}_p(\bmu,\bSigma)$ is undesirable, and therefore should happen very rarely. In practice, $\bmu, \bSigma$ need to be estimated from the data in such a way that the estimators are not affected by observations that deviate from an assumed majority of observations from ${\cal N}_p(\bmu,\bSigma)$. This requires robust estimation, and can be done by the MCD and the corresponding robust Mahalanobis distances as proposed by \cite{becker1999masking} for identifying outliers w.r.t. a multivariate Gaussian reference distribution. 

\begin{remark} Using an outlier definition based on a Gaussian reference distribution does not amount to assuming data to be really Gaussian. Firstly, such a definition applies well to situations where only a majority but not all of the data can be modeled as coming from a Gaussian distribution. Secondly, the Gaussian distribution can be seen as a calibration device rather than an assumption, i.e., observations are classified as outliers if they are further away from the mean, in terms of Mahalanobis distance, than would be expected under a Gaussian distribution. This can be applied regardless of whether the data or at least a majority of them comes from a Gaussian distribution. It however implies a concept of outlyingness that is relative to the mean of the supposedly non-outlying part of the underlying distribution based on Mahalanobis distance. This may not always be appropriate. For example, in asymmetric situations an outlier definition may be preferred according to which the direction of deviation from the mean makes a difference regarding outlyingness. In situations with clear clustering, researchers may want to treat observations between clusters as outliers. These considerations also apply to our definition for mixed type data below.
\end{remark}

\subsection{Latent Gaussian modeling of mixed type continuous/ordinal data}
\label{subsec:lgmixed}
A possibility to define outliers for mixed type continuous/ordinal data is to generalize \eqref{eq:normout} to a model in which ordinal variables are assumed to have been generated by thresholding latent Gaussian random variables, see \cite{drasgow86}. According to this approach, an ordinal random variable $X$ that w.l.o.g. takes values (ordinal categories) $1<\ldots<\ell$ with probabilities $\pi_1,\ldots,\pi_\ell$ is generated by a random variable $\tilde{X}\sim{\cal N}_1(0,1)$ with thresholds $\tau^0=-\infty\le \tau^1\le \ldots\le\tau^{\ell}=\infty$ so that for $k=1,\ldots,\ell$: $X=k \Leftrightarrow \tau^{k-1}\le\tilde{X}<\tau^k$. Note that mean and variance of $\tilde{X}$ cannot be identified from the distribution of $X$, and are therefore fixed at 0 and 1, respectively.

Now consider a situation with a vector of $p_C$ continuous and $p_O=p-p_c$ ordinal random variables, $\bX\defeq (X_1,\ldots,X_{p_C},X_{p_C+1},\ldots,X_p)$, w.l.o.g. ordered so that the first $p_c$ random variables are continuous. Generally we will denote the observations taken by random variables using the same letters but lower case. $\mathcal{X}_n\in \mathbb{R}^{n \times p}$ will denote a data set of size $n$ of mixed type observations $\mathbf{x}_1,\ldots,\mathbf{x}_n$, where $\mathbf{x}_i=(x_{i1},\ldots,x_{ip_C},x_{ip_C+1},\ldots,X_{ip}),\ i=1,\ldots,n$. For $j=1,\ldots,p$, $\mathbf{x}_{.j}$ will denote the vector of $n$ values of the $j$th variable. 
For $\mathcal{H}\subset\{1,\ldots,n\},\ \mathbf{x}_{\mathcal{H}j}$ 
denotes the values of the variables $\mathbf{x}_{.j}$ indexed by $\mathcal{H}$; analogous notation is later used for $\mathcal{Z}_n$, the standardized version of $\mathcal{X}_n$. 

$\bX$ can be modeled by assuming 
\begin{equation} \label{eq:latentgmodel}
\mathbf{\tilde{X}}\defeq (X_1,\ldots,X_{p_C},\tilde{X}_{p_C+1},\ldots,\tilde{X}_p)\sim {\cal N}_p(\bmu,\bSigma),
\end{equation}
where
 $$
\bmu \defeq [(\bmu^\text{C})^\top, \bmu^\text{O})^\top]^\top, \quad \bSigma \defeq \begin{bmatrix}
    \bSigma^{\text{CC}} & \bSigma^{\text{CO}}\\
    \bSigma^{\text{CO}} & \bSigma^{\text{OO}}
\end{bmatrix},
$$
$\bmu^\text{O} \defeq \mathbf{0}_{p-p_C}$ (mean vector of the ordinal-LGs), $\bSigma^{\text{OO}}$ has unit diagonal elements (variances of the ordinal-LGs). For $p_C+1\le j\le p$, the ordinal variables $X_j$ take values $k=1,\ldots,\ell_j$ defined by thresholds $\tau_j^0=-\infty\le \tau_j^{1}\le \ldots\le \tau_j^{\ell_j}=\infty$ according to $X_j=k \Leftrightarrow \tau_j^{k-1}\le\tilde{X}_j<\tau_j^k$. Denote $\boldsymbol{\tau}=(\boldsymbol{\tau}_{p_c+1},\ldots,\boldsymbol{\tau}_{p})$, $\boldsymbol{\tau}_j=(\tau_j^0,\ldots,\tau_j^{\ell_j}),\ j=p_C+1,\ldots,p$.
\subsection{Outlier regions for mixed type continuous/ordinal data}\label{subsec:outlierregion}
Using (\ref{eq:latentgmodel})  as a reference distribution for non-outliers, we can define an outlier region for mixed type observations $\bx\defeq (x_1,\ldots,x_{p_c},x_{p_C+1},\ldots x_p)$ generated from an ordinal-LG $\mathbf{\tilde{x}}=(x_1,\ldots,x_{p_C},\tilde{x}_{p_C+1},\ldots,\tilde{x}_p)$:
\begin{equation}\label{eq:mixout1}
  \mathrm{out}_1(\bmu,\bSigma;\alpha)\defeq \{\bx:\ (\mathbf{\tilde{x}}-\bmu)^\top\bSigma^{-1}(\mathbf{\tilde{x}}-\bmu)>\chi^2_{p; 1-\alpha}\}.
\end{equation}
When analysing data, $\bmu,\bSigma$ will again need to be estimated robustly. This is treated in Section \ref{sec:extendingmcd}. A specific problem of ordinal variables is that not only the parameters $\bmu,\bSigma$ are unobserved, but also the latent Gaussian values $\tilde{x}_{p_C+1},\ldots,\tilde{x}_{p}$. 

The unobserved latent Gaussian values $\mathbf{\tilde{x}}^\text{O}\defeq(\tilde{x}_{p_C+1},\ldots,\tilde{x}_p)$ ($\mathbf{\tilde{X}}^\text{O}$ is defined analogously, as well as $\mathbf{X}^\text{O},\ \mathbf{x}^\text{O}$) can be estimated, given the parameters $\bmu,\bSigma$, from the observed information in $\bX$ as
\begin{equation}
  \label{eq:extilde}
   \mathbf{\hat{\tilde{x}}}^\text{O}  \defeq \mathbb{E}[\mathbf{\tilde{X}}^\text{O} \mid \bX],\ \mathbf{\hat{\tilde{x}}}=(x_1,\ldots,x_{p_C}, \mathbf{\hat{\tilde{x}}}^\text{O}).
\end{equation}
Standard results on conditional Gaussian distributions imply 
\begin{displaymath}
\mathbf{\tilde{X}}^\text{O} \mid \bX  \sim \mathcal{TN}_{p_O} 
\left( \mathbf{\tilde{m}}, \mathbf{\tilde{S}}, \mathcal{A}(\bX^\text{O}, \boldsymbol{\tau}) \right),
\end{displaymath}
where $\mathcal{TN}_q(\bmu,\bSigma,M)$ denotes a $q$-variate truncated Gaussian distribution with $\bmu$ mean vector and $\bSigma$ covariance matrix of the Gaussian to be truncated, constrained on the set $M$, here the set of possible values of the ordinal-LGs compatible with observing the categories $\bX^\text{O}$:
$$
\mathcal{A}(\bX^\text{O}, \boldsymbol{\tau})\defeq
\left\{ (z_{p_C+1},\ldots,z_{p}) \in \mathbb{R}^{p_O}:\ 
\tau_j^{X_j-1}\le z_j\le \tau_j^{X_j},\ j\in p_C+1,\ldots,p\right\}.
$$
Furthermore,
\begin{displaymath}
    \mathbf{\tilde{m}} \defeq \bSigma^\text{OC} 
    \left( \bSigma^{\text{CC}} \right)^{-1} 
    \left(\mathbf{\tilde{x}}^\text{C} - \bmu^\text{C} \right),\
    \mathbf{\tilde{S}} \defeq \bSigma^\text{OO} - 
    \bSigma^\text{OC} \left( \bSigma^\text{CC} \right)^{-1} 
    \bSigma^\text{CO},
\end{displaymath}
which allows to evaluate (\ref{eq:extilde}). Analytical expressions for the expectation and higher moments of the multivariate truncated Gaussian distribution can be found in \cite{tallis1961moment}.

Although feasible in theory, the definition (\ref{eq:mixout1}) of $\mathrm{out}_1$ can be seen as unsatisfactory, because it implies that whether an observation $\bx$ is an outlier depends on the unobserved 
$\mathbf{\tilde{x}}^\text{O}$; the very same observation $\bx$ may be an outlier or not, depending on $\mathbf{\tilde{x}}^\text{O}$. In particular, it may not be possible to find an arbitrarily extreme outlier in the ordinal-LG corresponding to an ordinal variable (extreme value of $\tilde x_j,\ j>p_C$) in data, because  the corresponding $\hat{\tilde x}_j$ will not normally be extreme, unless the most extreme category is very rare.
\begin{remark}
We have chosen the latent Gaussian model setup (\ref{eq:latentgmodel}) because it allows to define outliers for mixed continuous/ordinal data in a unified manner. We do not believe or require that the unobserved latent Gaussian values $\mathbf{\tilde{x}}^\text{O}$ really exist. Note in particular that, considering a single ordinal variable with categories $1,\ldots,\ell$, the same distribution of outcomes can be generated by any continuous latent distribution with suitable thresholds, so that there is no way to identify the shape of the latent distribution from data. Assuming multivariate latent Gaussianity for a $p_O$-variate ordinal random vector constrains the dependence structure (it is a characteristic of the  multivariate Gaussian that binary covariances determine the whole dependence structure), but the shape of the latent marginal distributions can still not be identified from any amount of data. 

This means, in particular, that there is no way to diagnose deviations from latent Gaussianity from the marginal ordinal distributions. This implies that it is not possible to identify marginal ordinal outliers as long as outliers are interpreted as deviations from the Gaussian distributional shape. Outlyingness in the sense of incompatibility with the latent Gaussian distribution based exclusively on the ordinal variables can only happen by deviations from the Gaussian dependence structure, taking into account more than two variables simultaneously.  

In the light of these remarks, \eqref{eq:extilde} can be seen as {\it scoring} of the ordinal categories rather than {\it estimating} an existing truth, and the computation does not rely in any way on the underlying distribution really being a Gaussian.
\end{remark}
$\mathrm{out}_1$ is therefore arguably based on information that does not exist. Estimating $\mathrm{out}_1$ in practice would require to replace 
$\mathbf{\tilde{x}}^\text{O}$ by an estimate based on (\ref{eq:extilde}) anyway, but it may look more appropriate to do this already in the outlier definition in order to free it from its dependence on non-existing values:
\begin{equation}\label{eq:mixout2}
  \mathrm{out}_2(\bmu,\bSigma;\alpha)\defeq\{\mathbf{x}:\ (\mathbf{\hat{\tilde{x}}}-\bmu)^\top\bSigma^{-1}(\mathbf{\hat{\tilde{x}}}-\bmu)>\chi^2_{p; 1-\alpha}\}.
\end{equation}
Unfortunately, $(\mathbf{\hat{\tilde{x}}}-\bmu)^\top\bSigma^{-1}(\mathbf{\hat{\tilde{x}}}-\bmu)$ (which we call ``true conditional Mahalanobis distance'' or tcM later) will no longer be exactly $\chi^2_{p}$-distributed, and its actual distribution will irregularly depend on the distributions of the observed ordinal variables. In the present work, we will therefore stick to the $\chi^2_{p; 1-\alpha}$-cutoff. 

Observations can be in $\mathrm{out}_1$ but not in $\mathrm{out}_2$ for example if the latent $\tilde{x}_j$ for a single variable $p_C+1\le j\le p$ takes an extreme value that is ``tamed'' by just observing the corresponding $\hat{\tilde{x}}_j$. Therefore, $\mathrm{out}_2$, makes it hard for observations to be defined as outliers due to an extreme value of a single ordinal observation. This can be seen as appropriate because it may not seem intuitive, given e.g. an ordinal variable with five categories, to brand one of the categories as ``extreme'' or ``outlying'', at least as long as its probability under the non-outlier reference distribution is not very low.

On the other hand, observations can be outliers according to $\mathrm{out}_2$ but not $\mathrm{out}_1$ if the transformation $\mathbf{\tilde{x}} \mapsto \mathbf{\hat{\tilde{x}}}$ produces an $\mathbf{\hat{\tilde{x}}}$ that is less in line than $\mathbf{\tilde{x}}$ with the correlation structure of the ordinal-LGs determined by $\bSigma$. 

Based on this, one may think that observations that are in $\mathrm{out}_1 \setminus \mathrm{out}_2$ may not be seen as ``true outliers'' because the latent information that makes them outliers does not really exist. Observations in $\mathrm{out}_2 \setminus \mathrm{out}_1$ may not be seen as ``true outliers'' either, because they deviate from a correlation structure that is connected to the ordinal-LG rather than to the observed $\mathbf{\hat{\tilde{x}}}$, and the underlying $\mathbf{\tilde{x}}$ is not outlying. This consideration suggests a third definition:
\begin{equation}
  \label{eq:mixout3}
\mathrm{out}_3(\bmu,\bSigma;\alpha) \defeq \mathrm{out}_1(\bmu,\bSigma;\alpha) \cap \mathrm{out}_2(\bmu,\bSigma;\alpha).
\end{equation}
Obviously, under the latent distribution ${\cal N}_p(\bmu,\bSigma)$: 
$P(\mathrm{out}_3(\bmu,\bSigma;\alpha))\le\alpha$; $\mathrm{out}_3$ will define somewhat fewer outliers than the level $\alpha$ suggests. For the evaluation of our simulations in Section \ref{sec:sims} we use $\mathrm{out}_3$.

\section{The MCD for mixed-type data}\label{sec:extendingmcd}

In this section, we extend the MCD estimator to mixed-type data with continuous and ordinal variables, using the latent Gaussian framework and notation as introduced in Section \ref{subsec:lgmixed}. This will be done appropriately modifying the FastMCD algorithm as mentioned in Section \ref{sec:mcd}. 
\subsection{Latent Gaussian covariance estimation involving ordinal variables}\label{subsec:latentprojection}
In order to estimate the outlier regions $\mathrm{out}_1,\ \mathrm{out}_2$, it is necessary to estimate $\bmu$ and $\bSigma$ of $\mathbf{\tilde{X}}$ in a robust way, i.e., based on the non-outliers in the data without having information about what the non-outliers are. This is done by an adaptation of the MCD principle.

For mixed type data, this requires estimating entries of $\bSigma$ that correspond to pairs of continuous variables, pairs of ordinal variables, and pairs with one continuous and one ordinal variable.  In order to do this, we use the notions of polychoric and polyserial correlations \citep{pearson1900mathematical, pearson1909new}. Based on the latent Gaussian approach introduced above, these are generalisations of Pearson's correlation coefficient that estimate the correlation between the two ordinal-LGs (polychoric) or one ordinal-LG and one continuous variable (polyserial), respectively. 
Estimation of the thresholds can be done using either a maximum likelihood approach \citep{cox1974estimation} simultaneously with the polychoric correlations or a two-step estimator \citep{olsson1979maximum}. The main difference between the two is that the latter estimates the $k$th threshold of the $j$th variable $\tau_j^k$ directly from the marginal relative frequencies of variable $j$ without requiring an estimate of the polychoric/polyserial correlations, i.e., for $k=1,\ldots,\ell$, with $P_{jk}=|\{x_{ij}=k,\ i=1,\ldots,n\}|$:
\begin{equation}\label{eq:thresholdestimation}
\hat\tau_j^k\defeq\Phi^{-1}(P_{jk}^+),\ P_{jk}^+\defeq\sum_{m=1}^k P_{jm}.
\end{equation}
This is the approach we take here.  
Polychoric correlations are then estimated by maximum likelihood conditionally on the estimated thresholds. For details see Sec. 4 of \cite{olsson1979maximum}. Polyserial correlations are estimated analogously \citep{olsson_polyserial_1982}. 

The MCD principle can be applied to estimating correlations rather than covariances by computing correlations on the $h$-subset. These can be computed as plain Pearson correlations between two continuous variables, polychoric correlation between two ordinal variables, and polyserial correlation between a continuous and an ordinal variable. Given a matrix of pairwise correlations $\mathbf{R}$, we can construct a covariance matrix $\mathbf{S}$ by setting $\mathbf{S} = \mathbf{V}^{\frac{1}{2}}\mathbf{R}\mathbf{V}^{\frac{1}{2}}$, with $\mathbf{V}$ being a diagonal matrix of feature variances, computed from the data for the continuous variables, and set to one for the latent features. Similarly, the MCD mean vector can be estimated by estimating the mean for continuous features from the MCD $h$-subset and setting the mean for the ordinal-LGs to zero. Mahalanobis distances are computed based on $\mathbf{\hat{\tilde{X}}} = ((\mathbf{X}^\text{C})^\top , (\mathbf{\hat{\tilde{X}}}^\text{O})^\top)^\top$, see (\ref{eq:extilde}).

\subsection{Singularity and other issues}\label{subsec:singularity}

The categorical nature of the ordinal variables creates some issues with outlier identification based on the MCD principle, which aims at finding an MCD $h$-subset of the data that produces a small covariance determinant. 

For certain ordinal variables, particularly binary ones, there may be a single category that is observed with large probability and may occur more than $h$ times in observed data. This means that potential $h$-subsets exist in which a variable $x_j$, say, only takes a single category, leading to singularity of the covariance matrix when estimated using $\mathbf{\hat{\tilde{X}}}^\text{O}$. Generally such an $h$-subset would not provide any information regarding the correlation of $\tilde x_j$ with other variables. 

More generally, an $h$-subset may not represent all categories of $x_j$, implying issues with the estimation of the cutoff values $\tau_j^k$ in particular, leading potentially to strongly biased estimates \citep[see][for an empirical study]{savalei2011zero}.

Furthermore, the computation of $\mathbf{\hat{\tilde{X}}}^\text{O}$ uses information about the linear dependence between the variables in order to ``impute'' the missing ordinal-LG value. This may increase observed multicollinearity in the $h$-subset. 

In order to deal with these issues, we proceed as follows:
\begin{enumerate}
\item The thresholds $\boldsymbol{\tau}$ are estimated initially from the full data set before starting the FastMCD iteration, and they are not re-estimated on any $h$-subset. 

This implicitly treats the ordinal variables as being generated by a latent Gaussian even including potential outliers. This is different from the continuous case, where the outlier definition is motivated by looking for observations that are not in line with a majority assumed Gaussian. An essential difference between ordinal and continuous variables is that a continuous distribution can visibly differ from a Gaussian, and observations not in line with a homogeneous majority, Gaussian or not, can be diagnosed from the data. A single ordinal variable with a finite number of categories does not allow to diagnose deviations from normality in the latent variable, as any univariate distribution of the ordinal categories is compatible with any continuous latent distribution with suitable category thresholds. Any ordinal distribution can therefore be modeled by a latent Gaussian, and deviations from Gaussianity can only be diagnosed through the correlation structure involving the ordinal variables, see \cite{AlmMou14}.  Therefore, there is no need to worry about non-Gaussian contamination when estimating the thresholds $\boldsymbol{\tau}$ for the individual ordinal variables from the full data set. 
\item If only one category of an ordinal feature is represented in the current $h$-subset, the correlation between that variable and any other feature is set to zero, while keeping its variance equal to one due to the assumption of the latent features being standard Gaussian. In order to guarantee positive definiteness of the covariance matrix estimate, matrix regularisation is used using the approach of \cite{ledoit2004well}, i.e., $\mathbf{S}' = (1-\lambda)c(h,p)\mathbf{S} + \lambda\mathbf{Q}$, with regularisation strength $\lambda$ and diagonal regularisation matrix $\mathbf{Q}$. This is done in order to guarantee a maximum condition number $\kappa(\mathbf{S}')$ of the estimated covariance matrix $\mathbf{S}'$ so that $\kappa(\mathbf{S}'_t) \leq \kappa^*$. $\kappa^*$ is a user-specified upper bound on the condition number of $\mathbf{S}'_t$. We proceed with $\kappa^* = 50$. 

Before regularisation, in order to not have the condition number of the covariance matrix dominated by individual variables, the continuous variables are standardized to unit MAD (Median Absolute Deviation including the factor 1.4826 for estimating the Gaussian standard deviation). This allows to choose $\mathbf{Q}=\mathbf{I}_p$ (variances of the ordinal-LGs are 1 anyway). 

Regularisation is unfortunately not compatible with affine equivariance and consistency of the resulting estimator, but according to our experience with mixed type data, regularisation is essential.

 
\item Choosing $h$ close to $0.5n$ comes with a strong possibility to have empty categories in the $h$-subset, maybe just a single one with singularity. Although this possibility cannot be ruled out for larger $h$, it becomes less likely. Therefore we recommend to choose $h\ge 0.75n$. It seems hopeless to try to detect up to 50\% outliers in data that have ordinal variables in which substantially more than 50\% of the observations can be in a single category, which would make all other observations potential outliers. Therefore we content ourselves with a breakdown point of 25\% or even less of the MCD. 
\end{enumerate}
\subsection{Definition of the estimator}\label{subsec:mixedmcd}
Given a data set $\mathcal{X}_n$, using the notation for data sets from Section \ref{subsec:lgmixed}, the continuous/ordinal mixed type MCD estimator of location and scale is defined as follows (we will later use the term MCD for this, and notation connected to the MCD such as $\mathcal{H}^\text{MCD}$ will be used referring to the mixed type variables case from now on). Choose $h>\frac{n}{2}$ (we recommend $h\ge 0.75n$.). 
\begin{enumerate}
\item The thresholds $\boldsymbol{\tau}_{p_C+1}, \ldots, \boldsymbol{\tau}_p$ are estimated using \eqref{eq:thresholdestimation}.
\item Let $\mathcal{Z}_n$ be the data set $\mathcal{X}_n$ where the continuous variables have been standardized to median 0 and MAD 1. 
\item For an $h$-subset $\mathcal{H}\subset\{1,\ldots,n\},\ |\mathcal{H}|=h$, define 
\begin{equation}\label{eq:defsh}
\mathbf{S}_{\mathcal{H}}\defeq(1-\lambda)c(h,p)\mathbf{V}_{\mathcal{H}}^{\frac{1}{2}}\mathbf{R}_{\mathcal{H}}\mathbf{V}_{\mathcal{H}}^{\frac{1}{2}}+\lambda\mathbf{I}_p,  
\end{equation}
where $\mathbf{V}_{\mathcal{H}}$ is a $p\times p$ diagonal matrix that has the sample variances within $\mathcal{H}$ of the continuous variables $\mathbf{z}_{\mathcal{H}j}$ as the first $p_C$ diagonal entries and 1 as the remaining $p_O$ entries. $\mathbf{R}_{\mathcal{H}}$ is a $p\times p$ correlation matrix with entries $r_{jk},\ j,k=1,\ldots,p$. For $j,k\in\{1,\ldots,p_C\}$, $r_{jk}$ is the sample correlation between $\mathbf{z}_{\mathcal{H}j}$ and $\mathbf{z}_{\mathcal{H}k}$. If one of $j, k$ is in $\{1,\ldots,p_C\}$ and the other one in $\{p_C+1,\ldots,p\}$, $r_{jk}$ is the polyserial correlation  between $\mathbf{z}_{\mathcal{H}j}$ and $\mathbf{z}_{\mathcal{H}k}$ according to \cite{olsson_polyserial_1982} with fixed thresholds, and for  $j,k\in\{p_C+1,\ldots,p\}$, $r_{jk}$ is the polychoric correlation  between $\mathbf{z}_{\mathcal{H}j}$ and $\mathbf{z}_{\mathcal{H}k}$ according to \cite{olsson1979maximum} with fixed thresholds. $\lambda\in[0,1]$ is chosen as small as possible so that $\kappa(\mathbf{S}_{\mathcal{H}}) \leq \kappa^*$. 
\item As for the original MCD, also here
\begin{equation}\label{eq:hsubsetmixed}
    \mathcal{H}^\text{MCD} \defeq \argmin\limits_{\mathcal{H} \subset \{ 1, \ldots, n\} : \lvert \mathcal{H} \rvert = h} \det\left( \mathbf{S}_\mathcal{H}\right).
\end{equation}
\item Estimators $(\hat{\bmu}^\text{MCD}(\mathcal{X}_n),\hat{\boldsymbol{\Sigma}}^\text{MCD}(\mathcal{X}_n))$ are defined by
\begin{eqnarray}
j\in\{1,\ldots,p_C\}:\  \hat{\mu}^\text{MCD}_j(\mathcal{X}_n) & \defeq & \frac{1}{h}\sum\limits_{i \in \mathcal{H}^\text{MCD}} \mathbf{x}_{ij},\label{eq:estmuc}\\ 
j\in\{p_C+1,\ldots,p\}:\ \hat{\mu}^\text{MCD}_j(\mathcal{X}_n) &\defeq & 0,\label{eq:estmuo}\\
\hat{\boldsymbol{\Sigma}}^\text{MCD}(\mathcal{X}_n) & \defeq & \mathbf{M}\mathbf{S}_{\mathcal{H}^\text{MCD}} \mathbf{M},\label{eq:sigmatrans}
\end{eqnarray}
where $\mathbf{M}$ is a diagonal matrix with MAD$(\mathbf{x}_{.j})$ as the 
first $p_C$ diagonal entries and 1 as the remaining $p_O$ diagonal entries. 
\end{enumerate}
\subsection{An algorithm for computing the MCD}\label{subsec:algorithm}
As for the original MCD, it is computationally impossible to evaluate $\mathbf{S}_\mathcal{H}$ for all $\mathcal{H}$. For this reason we propose an adaptation of FastMCD in 
Algorithm~\ref{code:mcdmixed}.
The first step (constructing the initial $h$-subset based on Mahalanobis distances from an initial random subset of size $p_C+1$) is based on continuous variables only, because of the difficulty of estimating polychoric and polyserial correlation from a too small sample. Algorithm~\ref{code:mcdmixed} describes just one run of the algorithm; this is to be repeated for multiple random initialisations, with the best $h$-subset being the one that produces the $\mathbf{S}_\mathcal{H}$  with the smallest determinant. 

For finding $\lambda$ in every step, we use the bisection method to find the smallest value of $\lambda$ up to precision of $\delta$, for which $\kappa(\mathbf{S}') \leq \kappa^*$. We use $\delta = 10^{-4}$.

Outliers are identified by computing robust Mahalanobis distances using the obtained mean and covariance estimates and comparing these to the $\alpha_n$-quantile of a $\chi^2_p$ random variable, where $\alpha_n = (1-\beta)^{1/n}$. This amounts to just plugging the presented estimators into $\mathrm{out}_2$ in \eqref{eq:mixout2}). Note that the difference between $\mathrm{out}_1$ and $\mathrm{out}_2$ regards the estimation of the unobservable latent information in $\mathrm{out}_1$, so that the estimator of $\mathrm{out}_2$ presented here is also an estimator of $\mathrm{out}_1$ as well as $\mathrm{out}_3$.
We call the resulting outlier identifier MCD-omix. 
We test the suitability of using a large quantile of the $\chi^2_p$ distribution for outlier identification purposes empirically by simulating a hundred outlier-free data sets with four continuous and four ordinal variables with two, four, five, and ten categories and running our method with $h = \lfloor 0.99n \rfloor$. The large quantiles of the empirical distribution functions of the robust distances are very close to the theoretical cumulative distribution function of a $\chi^2_8$ random variable, as illustrated in Figure~\ref{fig:mcdord_asymptoticdist_4cat}.

\begin{figure}[!ht]
    \centering
    \includegraphics[width=\linewidth]{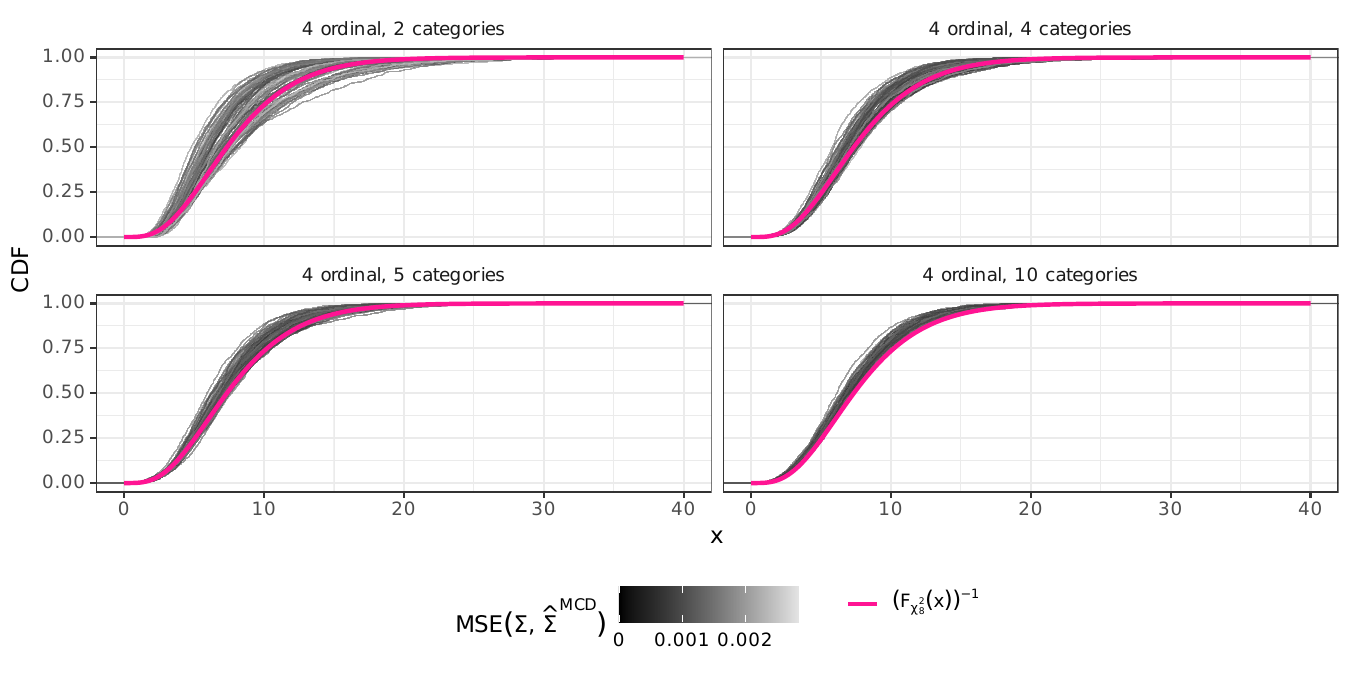}
    \caption{Empirical cumulative distribution functions (CDFs) for robust distances on an outlier-free data set with 1000 observations, four continuous, and four ordinal variables with two, four, five, and ten categories. The pink line is the theoretical CDF of the $\chi^2_8$ distribution.}
    \label{fig:mcdord_asymptoticdist_4cat}
\end{figure}

\begin{algorithm}[!ht]
  \caption{MCD for mixed continuous-ordinal data (MCD-omix)}
  \label{code:mcdmixed}
  \begin{algorithmic}[1]
  \small
\State \textbf{Input:} Data set $\mathcal{X}_n \in \mathbb{R}^{n \times p}$, value of $h$, maximum number of iterations $t^{\text{max}}$.
\State Obtain $\mathcal{Z}_n$ by standardising $\mathbf{x}_{.1}, \ldots, \mathbf{x}_{.p_C}$ to median 0 and MAD 1.
\State Estimate thresholds $\boldsymbol{\tau}_{p_C+1}, \ldots, \boldsymbol{\tau}_p$ according to \eqref{eq:thresholdestimation}.
\State Randomly set $\mathcal{H}_0\subset \{ 1, \ldots, n\}: \lvert \mathcal{H}_0\rvert = p_C+1$.
\State Set $\mathbf{m}^\text{C}_0 \leftarrow \sum_{i \in \mathcal{H}_0} \mathbf{z}_i^\text{C}/ \lvert \mathcal{H}_0 \rvert$, $\mathbf{S}^\text{CC}_0 \leftarrow \sum_{i \in \mathcal{H}_0}(\mathbf{z}^\text{C}_i - \mathbf{m}^\text{C}_0)(\mathbf{z}^\text{C}_i - \mathbf{m}^\text{C}_0)^\top/(\lvert \mathcal{H}_{t-1} \rvert-1)$.
\State Compute Mahalanobis distances $\text{MD}(\mathbf{z}^\text{C}_i) \ \forall i = 1, \ldots, n$ using $\mathbf{m}^\text{C}_0$ and $\mathbf{S}^\text{CC}_0$.
\State Set $\mathcal{H}_1 \subset \{1, \ldots, n\}$ with  $\lvert \mathcal{H}_1 \rvert = h$ so that $\text{MD}(\mathbf{z}^\text{C}_i) \leq \text{MD}(\mathbf{z}^\text{C}_{i'}) \ \forall i \in \mathcal{H}_1, i' \notin \mathcal{H}_1$.
\State Set $t \leftarrow 1$, and \texttt{converged $\leftarrow$ FALSE}.
\Repeat
    \State Compute $\mathbf{R}_{\mathcal{H}_t}, \mathbf{V}_{\mathcal{H}_t}$, $\lambda$, $\mathbf{S}_{\mathcal{H}_t}$ according to \eqref{eq:defsh}.
    \State
Set $\mathbf{m}^\text{C}_{t} \leftarrow \sum_{i \in \mathcal{H}_{t}} \mathbf{z}^\text{C}_i/\lvert \mathcal{H}_{t} \rvert$, $\mathbf{m}^\text{O}_{t} \leftarrow \boldsymbol{0}_{p_O}$.
    \State Estimate latent values $\mathbf{\hat{\tilde{z}}}_i^\text{O}, 1 \leq i \leq n$, using \eqref{eq:extilde}; $\mathbf{\hat{\tilde{z}}}_i=(\mathbf{z}_i^\text{C},\mathbf{\hat{\tilde{z}}}_i^\text{O})$.
    \State Set $t \leftarrow t+1$.
    \State Compute Mahalanobis distances $\text{MD}(\mathbf{\hat{\tilde{z}}}_i) \ \forall i = 1, \ldots, n$ using $\mathbf{m}_{t-1}$ and $\mathbf{S}_{t-1}$.
    \State Set $\mathcal{H}_t \subset \{1, \ldots, n\}$ with $\lvert \mathcal{H}_t \rvert = h$ so that $\text{MD}(\mathbf{\hat{\tilde{z}}}_i) \leq \text{MD}(\mathbf{\hat{\tilde{z}}}_{i'}) \ \forall i \in \mathcal{H}_t, i' \notin \mathcal{H}_t$.
\If{$\mathcal{H}_t = \mathcal{H}_{t-1}$}
    \State Set \texttt{converged $\leftarrow$ TRUE}.
\EndIf
\Until{\texttt{converged} \textbf{or} $t > t^{\text{max}}$}
\State Retransform estimators according to \eqref{eq:estmuc}, \eqref{eq:estmuo}, \eqref{eq:sigmatrans}.
\State \textbf{Output:} $h$-subset $\mathcal{H}^{MCD*}$, Mahalanobis distances $\text{MD}(\mathbf{\hat{\tilde{z}}}_i)$, $\hat{\boldsymbol{{\mu}}}^\text{MCD*}$,
$\hat{\boldsymbol{\Sigma}}^\text{MCD*}$. (MCD* denotes a locally optimal candidate for being the MCD.)
\end{algorithmic}
\end{algorithm}

\section{A breakdown theorem}\label{sec:breakdown}
Here we show that in case that $g<n-h$ observations of a data set $\mathcal{X}_n$ are replaced by arbitrary observations, observations that are extreme enough outliers w.r.t. $\mathcal{X}_n$ will be identified as outliers w.r.t. the resulting data set $\mathcal{X}_n^*$, i.e., bringing in any $g$ observations including these outliers cannot stop the outliers from being identified. This implies that the MCD-estimator computed on $\mathcal{X}_n^*$ cannot be arbitrarily far away from the one on $\mathcal{X}_n$, which in robust statistics is usually referred to as (finite sample) breakdown \citep{DonohoHuber83}. However it has been argued that the standard breakdown point results in robust statistics require equivariance in order to derive a nontrivial upper bound on the breakdown point \citep{daviesgather05}, which we do not have here. 

Using the same notation as before, let $\mathcal{X}_n=(\mathbf{x}_1,\ldots,\mathbf{x}_n)$ be a fixed data set. 
For some fixed $c_n$, consider outlier identification w.r.t. the data $\mathcal{X}_n$ using 
$$
(\mathbf{\hat{\tilde{x}}}-\hat{\bmu}^\text{MCD}(\mathcal{X}_n))^\top \hat{\boldsymbol{\Sigma}}^\text{MCD}(\mathcal{X}_n)^{-1}(\mathbf{\hat{\tilde{x}}}-\hat{\bmu}^\text{MCD}(\mathcal{X}_n))>c_n.
$$

The following assumption avoids degeneration of the MAD to 0, and therefore degeneration of the standardisation.
\begin{assumption}
\label{a:mad} For the dataset $\mathcal{X}_n$,
for each continuous variable $j\in\{1,\dots,p_C\}$, at most 
$v_j\le h-\frac{n}{2}$
observations take the same value in  $\mathcal X_n$.
\end{assumption}

\begin{theorem}\label{t:breakdown}
Let $\mathcal X_n^*=(\mathbf{x}_1^*,\dots,\mathbf{x}_n^*)$ be any data set obtained by replacing exactly $g$ observations in $\mathcal X_n$, where $g<n-h$. Then, under Assumption \ref{a:mad}, there exists a finite constant $\eta=\eta(\mathcal X_n,g)$ such that every replaced observation $\mathbf{x}_i^*$ satisfying
$$
M(\hat{\mathbf{x}}_i^*;\mathcal X_n)
\defeq 
(\hat{\mathbf{x}}_i^*-\hat{\boldsymbol{\mu}}^{\mathrm{MCD}}(\mathcal X_n))^\top
\bigl(\hat{\boldsymbol{\Sigma}}^{\mathrm{MCD}}(\mathcal X_n)\bigr)^{-1}
(\hat{\mathbf{x}}_i^*-\hat{\boldsymbol{\mu}}^{\mathrm{MCD}}(\mathcal X_n))
>\eta
$$
is identified as an outlier with respect to $\mathcal X_n^*$.
\end{theorem}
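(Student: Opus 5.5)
We aim to show that the MCD estimates computed on $\mathcal X_n^*$ stay in a bounded set determined only by $\mathcal X_n$ and $g$. Once this holds, outlyingness with respect to $\mathcal X_n$ carries over to $\mathcal X_n^*$ after a finite enlargement of the cutoff. We use two structural facts. First, since $g<n-h$, every $h$-subset of $\mathcal X_n^*$ contains at least $h-g>n-2h+h\ldots$; more usefully, there are $h$-subsets consisting only of original points, and every $h$-subset contains at least $h-g\ge 1$ original points. Second, the regularisation in \eqref{eq:defsh} forces $\kappa(\mathbf{S}_{\mathcal H})\le\kappa^*$, so all eigenvalues of $\mathbf{S}_{\mathcal H}$ are comparable to its largest one.

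\textbf{Step 1: standardisation is controlled.} Under Assumption~\ref{a:mad}, replacing $g<n-h\le n/2$ points keeps the median of each continuous variable within the range of the original values. The MAD is bounded above by a constant depending on $\mathcal X_n$. It is also bounded below away from zero, because at most $v_j+g<h-n/2+n-h=n/2$ of the values coincide. Hence the diagonal matrix $\mathbf M^*$ and the centring of $\mathcal Z_n^*$ lie in a compact set of nondegenerate values. The thresholds $\hat{\boldsymbol\tau}$ change, since they are computed from full-data frequencies. However, they can take only finitely many values, because each cumulative frequency changes by at most $g/n$. A degenerate threshold of $\pm\infty$ only affects empty categories, and we handle that case by finite enumeration.

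\textbf{Step 2: bound the optimal determinant and the covariance.} Take an $h$-subset $\mathcal H_0$ of unreplaced points in $\mathcal X_n^*$. By Step 1, $\det(\mathbf S_{\mathcal H_0})\le D$ with $D$ depending only on $\mathcal X_n$ and $g$. Correlations lie in $[-1,1]$ and the variances are of original standardized points, so this bound holds; we may also take the maximum over the finitely many threshold configurations. Then $\det(\mathbf S_{\mathcal H^{\rm MCD*}})\le D$. The condition-number bound gives $\lambda_{\max}^p/\kappa^{*p}\le\det\le D$, so $\lambda_{\max}(\mathbf S_{\mathcal H^{\rm MCD*}})\le \kappa^*D^{1/p}$. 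For the lower bound, the term $\lambda\mathbf I_p$ with $\lambda\ge0$ does not immediately give one. We use $\lambda_{\min}\ge \lambda_{\max}/\kappa^*$ together with $\lambda_{\max}\ge$ the largest diagonal entry $\ge c(h,p)(1-\lambda)\cdot 1+\lambda\ge\min(1,c(h,p))$, which holds because the ordinal-LG diagonal entries equal $1$ when $p_O\ge1$. If $p_C=p$, the lower bound comes instead from the $h$-subset containing at least $h-g>n/2\ge$ original points, whose spread is bounded below by Assumption~\ref{a:mad}. This gives eigenvalues in $[\ell_0,u_0]$ with $0<\ell_0\le u_0<\infty$.

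\textbf{Step 3: bound the location.} The continuous location is the mean of the selected $h$-subset. The subset may contain far-out replaced points, so we argue through the covariance bound. Every point of $\mathcal H^{\rm MCD*}$ has squared deviation from the subset mean at most $(h-1)u_0/c(h,p)$ after accounting for the regularisation factor, i.e.\ the trace bound $\sum_{i\in\mathcal H}\|\mathbf z_i-\bar{\mathbf z}\|^2\le (h-1)\,\mathrm{tr}\,\mathbf S/((1-\lambda)c)$. This is the main obstacle: if $\lambda$ is near $1$ the trace bound degenerates. We handle it by noting that $\lambda<1$ strictly whenever the unregularised matrix is nonzero, and by bounding $(1-\lambda)$ below via the determinant comparison with $\mathcal H_0$. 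Since at least one original point lies in the subset, the mean lies within a bounded distance of the convex hull of $\mathcal Z_n$. The latent scores $\hat{\tilde{\mathbf z}}^{\rm O}$ are conditional truncated-Gaussian means. They are continuous in the bounded parameters and, for fixed category, grow at most linearly in $\mathbf z^{\rm C}$.

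\textbf{Step 4: conclude.} Combining these bounds, $M(\hat{\mathbf x}_i^*;\mathcal X_n^*)\ge a\,M(\hat{\mathbf x}_i^*;\mathcal X_n)-b$ for constants $a>0$ and $b$ depending only on $(\mathcal X_n,g)$. This uses norm equivalence of the two bounded-eigenvalue quadratic forms, the bounded shift of the centres, and the linear-growth control of the latent scores under the two parameter sets. Setting $\eta=(c_n+b)/a$ then proves the theorem.
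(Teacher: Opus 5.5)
Your Steps 1 and 2 are sound. Step 2 is a valid and more direct way to control the scatter than the paper's: you combine $\det(\mathbf S_{\mathcal H^{\rm MCD*}})\le D$ with $\kappa\le\kappa^*$ to get two-sided eigenvalue bounds. The gap is in Step 3, where you bound $1-\lambda$ for the optimal subset from below ``via the determinant comparison with $\mathcal H_0$''. That comparison does not give such a bound in general.

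Write $\mathbf A_{\mathcal H}\defeq c(h,p)\mathbf V_{\mathcal H}^{1/2}\mathbf R_{\mathcal H}\mathbf V_{\mathcal H}^{1/2}$ with eigenvalues $a_1\ge\dots\ge a_p$. If $a_1>\kappa^*a_p$, the minimal admissible $\lambda$ satisfies $t\defeq\lambda/(1-\lambda)=(a_1-\kappa^*a_p)/(\kappa^*-1)$, and $\mathbf S_{\mathcal H}$ has eigenvalues $(a_k+t)/(1+t)$. Suppose a remote replaced point in $\mathcal H$ drives $a_1\to\infty$ while $a_2,\dots,a_p$ stay bounded. For $a_p$ this is automatic when $p_O\ge1$, since the ordinal diagonal entries of $\mathbf A_{\mathcal H}$ equal $c(h,p)$. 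Then $t\to\infty$, $\lambda\to1$ and $\det(\mathbf S_{\mathcal H})\to\kappa^*$: the shrinkage towards $\mathbf I_p$ absorbs the blow-up.

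For positive semidefinite $\mathbf A_{\mathcal H}$ one has $\det(\mathbf S_{\mathcal H})\ge\kappa^*\lambda^p$. So a comparison value $D<\kappa^*$ would keep $\lambda$ away from $1$, but nothing in the assumptions gives $D<\kappa^*$. Without it, the optimal subset may contain arbitrarily remote replaced points. Your trace bound $\sum_{i\in\mathcal H}\|\mathbf z_i-\bar{\mathbf z}\|^2\le(h-1)\,\mathrm{tr}(\mathbf S_{\mathcal H})/((1-\lambda)c(h,p))$ then degenerates, and $\hat{\boldsymbol\mu}^*$ is uncontrolled.

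The eigenvalue bounds of Step 2 cannot substitute for this. Take a subset containing replaced points at $y$ and $(h-1)y$ in one continuous coordinate, with its other $h-2$ points unchanged. Its mean lies within $O(1)$ of the first replaced point, which therefore has bounded distance under the contaminated fit even though $M(\hat{\mathbf x}_i^*;\mathcal X_n)\to\infty$. So the ``bounded shift of the centres'' in Step 4, and with it $M(\cdot;\mathcal X_n^*)\ge a\,M(\cdot;\mathcal X_n)-b$, is exactly what remains unproven.

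The paper reaches the location bound by a different route. Lemma~\ref{lem:large-det} asserts that every $h$-subset containing a sufficiently extreme replaced point has $\det(\mathbf S^*_{\mathcal H})>B$, the largest clean determinant from Lemma~\ref{lem:clean-det}. Such points then never enter $\mathcal H^{\mathrm{MCD},*}$, and every selected point is bounded. That exclusion is the ingredient you are missing, but you cannot simply import it. Its key step, that $\mathbf S^*_{\mathcal H}$ ``cannot remain in a bounded set'', meets the same obstruction. For $p_O\ge1$, $\lambda_{\min}(\mathbf S_{\mathcal H})\le(1-\lambda)c(h,p)+\lambda\le\max\{1,c(h,p)\}$. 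Hence $\lambda_{\max}(\mathbf S_{\mathcal H})\le\kappa^*\max\{1,c(h,p)\}$ and $\det(\mathbf S_{\mathcal H})\le(\kappa^*\max\{1,c(h,p)\})^p$ for every $h$-subset, contaminated or not.

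Closing the gap therefore needs a genuinely new argument about which subsets can be determinant-optimal, or a different regularisation. With an additive ridge $\mathbf A_{\mathcal H}+\lambda\mathbf I_p$, for instance, one has $c(h,p)s^2_{\mathcal H,j}\le(\mathbf S_{\mathcal H})_{jj}\le\kappa^*D^{1/p}$, and your Steps 3 and 4 go through essentially as written. The same holds for the paper's definition whenever $D<\kappa^*$.

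A smaller slip in Step 2: you only know $h-g>2h-n$, not $h-g>n/2$. However, $2h-n\ge h-\frac n2\ge v_j$ already forces at least two distinct original values of variable $j$ in the subset, which is all you need.
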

The proof is given in the Appendix.
\begin{remark} The result is different from the masking breakdown results in \cite{becker1999masking}, which require affine equivariance. Note in particular that without regularisation ``breakdown'' can happen because of eigenvalues of the MCD-covariance matrix that degenerate to 0. This is impossible here, which also means that we can allow $g<n-h$ rather than $g<n-h-p$, which is required in the purely continuous case.

In order for $M(\mathbf{\hat{\tilde{x}}}_i^*,\mathcal{X}_n)$ to be arbitrarily large, at least one continuous variable of $\mathbf{x}_i^*$ needs to take an arbitrarily large value, because otherwise \eqref{eq:extilde} will only generate bounded values of $\mathbf{\hat{\tilde{x}}}^\text{O}$ (see the proof of Theorem \ref{t:breakdown}). This means that outliers that appear in the ordinal variables (and their ordinal-LGs) only are not relevant for the theorem as their influence is bounded anyway.
\end{remark} 

\begin{remark} It could also be of interest whether outliers already in 
$\mathcal{X}_n$ can be masked by replacing other observations. Using the same arguments as the proof of Theorem \ref{t:breakdown}, it can also be shown that out of the $q$ observations $\mathcal{Q}\defeq\{\mathbf{x}_{i_1},\ldots,\mathbf{x}_{i_q}\}$ in  $\mathcal{X}_n$ with the largest values of 
$M(\mathbf{\hat{\tilde{x}}}_i,\mathcal{X}_n)$, those with $M(\mathbf{\hat{\tilde{x}}}_{i_j},\mathcal{X}_n)>\eta$ will still be identified as outliers w.r.t. $\mathcal{X}_n^*$ under replacement of $g$ observations other than the $q$ outliers. Whether observations in $\mathcal{Q}$ are identified as outliers in $\mathcal{X}_n^*$, and therefore the value of $\eta$, should only depend on $g+q$ and the non-outliers, i.e., the observations in $\mathcal{X}_n\setminus \mathcal{Q}$.  
Such a result requires $g+q<n-h$, making it possible in the proof to not only leave the observations in $\mathcal{G}$ out of the optimal $h$-subset, but also those in $\mathcal{Q}$. Otherwise the proof goes through as above.
\end{remark}
\section{Simulations}\label{sec:sims}

We ran simulations on artificially generated data sets with various combinations of the proportion of outliers $\epsilon$ and $h$, for four different types of contamination as well as no outliers. Outliers were generated according to Huber's contamination mixture $(1-\epsilon)F_{\boldsymbol{\theta}} + \epsilon G$ \citep{huber_robust_1964}, where $F_{\boldsymbol{\theta}}$ is taken to be a Gaussian distribution (discretized latent Gaussian for ordinal variables) parametrized by parameters $\boldsymbol{\theta} = (\boldsymbol{\mu}, \boldsymbol{\Sigma})^\top$ to be estimated.  $G$ is referred to as contamination, and $\epsilon \in [0, 0.5]$ is the proportion of contamination. Recall that in Section \ref{subsec:outlierregion} outliers have been defined relative to $(\boldsymbol{\mu}, \boldsymbol{\Sigma})$ as observations in the tail areas of the (latent) Gaussians. When using this definition here, we use $F_{\boldsymbol{\mu}, \boldsymbol{\Sigma}}$ and its parameters as reference distribution. However, the definition implies that not necessarily all points generated by $F_{\boldsymbol{\mu}, \boldsymbol{\Sigma}}$ are non-outliers, and not necessarily all points generated from $G$ are outliers unless $G$ is defined in such a way that all its probability mass is in the outlier region w.r.t. $F_{\boldsymbol{\mu}, \boldsymbol{\Sigma}}$.

Our results show that MCD-omix achieves high outlier detection accuracy, effectively identifying true outliers with a low false-positive rate.

\subsection{Simulation design}

For each simulation setting we generated 100 data sets for each combination of $h$ and $\epsilon$, with $n = 500$ observations, $p_C = 4$ continuous, and $p_O = 3$ ordinal features. Out of these, one has three categories, one has four categories, and one has two categories. Gaussian data (before discretising the ordinal variables, and without contamination) were generated from a $p$-variate Gaussian $\mathcal{N} \sim (\mathbf{0}, \boldsymbol{\Sigma})$, where $\boldsymbol{\Sigma}$ is a random covariance matrix with a fixed condition number of a hundred that was constructed using the approach of \cite{agostinelli_robust_2015}. We take the zero vector as the mean for simplicity. The $p_O$ latent-LGs were discretized using appropriate quantiles of the corresponding Gaussian for reaching uniform probabilities of the categories for each ordinal variable for the uncontaminated part of the data. This implies that because of contamination category probabilities in the overall data generating process may not be exactly equal.  

Various values of $h$ and proportions of contamination were considered, namely $h=\lfloor 0.75n\rfloor, \lfloor 0.8n \rfloor$, and $\lfloor 0.9n \rfloor$, and $\epsilon = 0.05, 0.10, 0.15, 0.20,$ and $0.25$. For each $\epsilon$, outliers were generated using an iterative process, in order to ensure that exactly $n\epsilon$ outliers are present in the data. We treat an observation as outlying if and only if it is outlying according to both \eqref{eq:mixout1} and \eqref{eq:mixout2}, i.e., we use outlier region $\mathrm{out}_3(\bmu,\bSigma;\alpha)$. We choose $\alpha=\alpha_n\approx 2*10^{-5}$ according to \eqref{eq:alphan} with $n=500, \beta=0.01$ (the probability to find at least one outlier in a clean sample from the reference (latent) Gaussian of size 500). 


The first type of contamination we consider is shift contamination. Shift contamination observations are generated by drawing samples from $\mathcal{N}(\boldsymbol{\mu}', \boldsymbol{\Sigma})$, where $\boldsymbol{\mu}' = k\mathbf{u}_1$, $k = 50$, and $\mathbf{u}_1$ is the eigenvector that corresponds to the smallest eigenvalue of $\boldsymbol{\Sigma}$, scaled so that $\mathbf{u}_1^\top\boldsymbol{\Sigma}^{-1}\mathbf{u}_1 = p$. The reason behind the choice of $\mathbf{u}_1$ being the eigenvector that corresponds to the smallest eigenvalue of $\boldsymbol{\Sigma}$ is made on the basis of it being the direction where outliers are hardest to detect.

A more challenging scenario is examined by allowing for shift contamination to exist in multiple directions. We generate such shift contamination by sampling $\lfloor n\epsilon/4 \rfloor$ observations from each of $\mathcal{N}(\boldsymbol{\mu}_j', \boldsymbol{\Sigma}), j = 1, \ldots, 4$, where $\boldsymbol{\mu}'_j = k_j \mathbf{u}_j$ and $\mathbf{u}_j$ is the eigenvector corresponding to the $j$th smallest eigenvalue of $\boldsymbol{\Sigma}$. In order to compensate for the increased difficulty of outlier identification across dimensions of lower variability, we set the $k_j$'s to be decreasing and evenly spaced between ten and five units. The eigenvectors are appropriately scaled again so that $\mathbf{u}_j^\top \boldsymbol{\Sigma}^{-1} \mathbf{u}_j = p \ \forall j\in\{1, \ldots, 4\}$.

In order to create heavy-tailed contamination, we simulated contamination from a $p$-variate Student’s‑$t$ distribution with two degrees of freedom and scale matrix $\boldsymbol{\Sigma}$. Potential contaminant vectors were drawn from this $t_2$ distribution; only the first $\lfloor n\epsilon \rfloor$ draws that met the outlier criterion (i.e., being in $\mathrm{out}_3(\bmu,\bSigma;\alpha)$)  were retained as the actual outliers in each simulation.

The last type of contamination we consider is mixed-correlation contamination. These are observations designed to clearly deviate from the inherent correlation between continuous and ordinal variables. This is achieved by introducing an auxiliary variable $Z = \mathbf{X}^{\text{C}}\mathbf{v}_1$, defined as the projection of the continuous variables onto the first principal component $\mathbf{v}_1$ of $\boldsymbol{\Sigma}^{\text{CC}}$. Incorporating $Z$ expands the covariance matrix to a singular form $\boldsymbol{\Sigma}^*$ of size $(p+1)\times(p+1)$.
After generating clean data by drawing samples from a zero-mean $(p+1)$-variate Gaussian distribution with covariance matrix $\tilde{\boldsymbol{\Sigma}}$, $Z$ is discretized to consist of five categories. Outliers are then created by perturbing a subset of observations along the direction of $\mathbf{v}_1$ to force their ordinal labels into non-adjacent categories, thereby breaking the original continuous-ordinal association. Due to the singularity of $\tilde{\boldsymbol{\Sigma}}$, the original Mahalanobis and tcM distances are computed using an eigendecomposition of the singular covariance matrix $\tilde{\boldsymbol{\Sigma}}$, that is, for a point $\mathbf{x}$, its Mahalanobis distance is computed as:
$$\text{MD}^2(\mathbf{x}) = \sum_{j:\lambda_j > \rho} \frac{\big(\mathbf{v}_j^\top (\mathbf{x} - \boldsymbol{\mu})\big)^2}{\lambda_j},$$
where $\lambda_j$ and $\mathbf{v}_j$ are the eigenvalues and eigenvectors of $\tilde{\boldsymbol{\Sigma}}$, and $\rho>0$ is a small threshold to ensure numerical stability. We choose $\rho=10^{-5}$ here. Points that have a non-negligible projection onto eigenvectors with $\lambda_j \leq \rho$ (that is, directions in the null space) correspond to points lying outside the subspace spanned by the non-singular part of $\tilde{\boldsymbol{\Sigma}}$. These points are assigned an infinite distance, thus marking them as outliers relative to the uncontaminated data subspace.

In all simulations conducted, MCD-omix is run with fifty random initial subset choices. Despite the primary goal of our approach being outlier identification, we also compute the 
Mean Squared Error (MSE) between the original and the estimated covariance matrix, defined as $\text{MSE}(\boldsymbol{\Sigma}, \hat{\boldsymbol{\Sigma}}) = \sum_{i=1}^p\sum_{j=1}^p(\Sigma_{i,j}-\hat{\Sigma}_{i,j})^2/p^2$.

\subsection{Simulation results}

We summarize our results for data sets with no deliberately generated outliers in Table~\ref{tab:mcdord_no_outs}. The average number of outliers across all hundred uncontaminated data sets was 0.01, i.e., just one data set included an outlier, which is in line with the choice of $\beta = 0.01$. As can be seen, the method does not identify many outliers on average, with the mean detection rate going down as $h/n \rightarrow 1$. The MSE is also very low, indicating that the estimated covariance matrix estimates the correlation structure well.

\begin{table}[!ht]
\centering
\begin{tabular}{lccc}
\toprule
& $h=\lfloor 0.75n\rfloor$ & $h=\lfloor 0.8n \rfloor$ & $h= \lfloor 0.9n\rfloor$ \\ \midrule
Detected & 1.70 & 0.97 & 0.19\\
MSE & 0.022 & 0.015 & 0.006\\
\bottomrule
\end{tabular}
\caption{Average number of points per data set detected as outlying, and average MSE for uncontaminated data sets.}
\label{tab:mcdord_no_outs}
\end{table}

For the setups with contaminated data, we report the average proportion of true outliers detected, the average number of observations falsely detected as outlying, and the average MSE. These are summarized in Figures~\ref{fig:shift_detection_performance}-\ref{fig:mixedcorr_detection_performance}. Results are nearly perfect for the case of shift contamination, with all shift outliers detected and few non-outlying points being flagged, as shown in Figure~\ref{fig:shift_detection_performance}. The average number of false detections tends to decrease as $h/n \rightarrow 1$, which is in line with the results of MCD-based outlier detection methods, while we observe that the average MSE increases for rates of contamination $\epsilon > 0.10$ but it is still kept at low levels.

\begin{figure}[!ht]
    \centering
    \includegraphics[width=\linewidth]{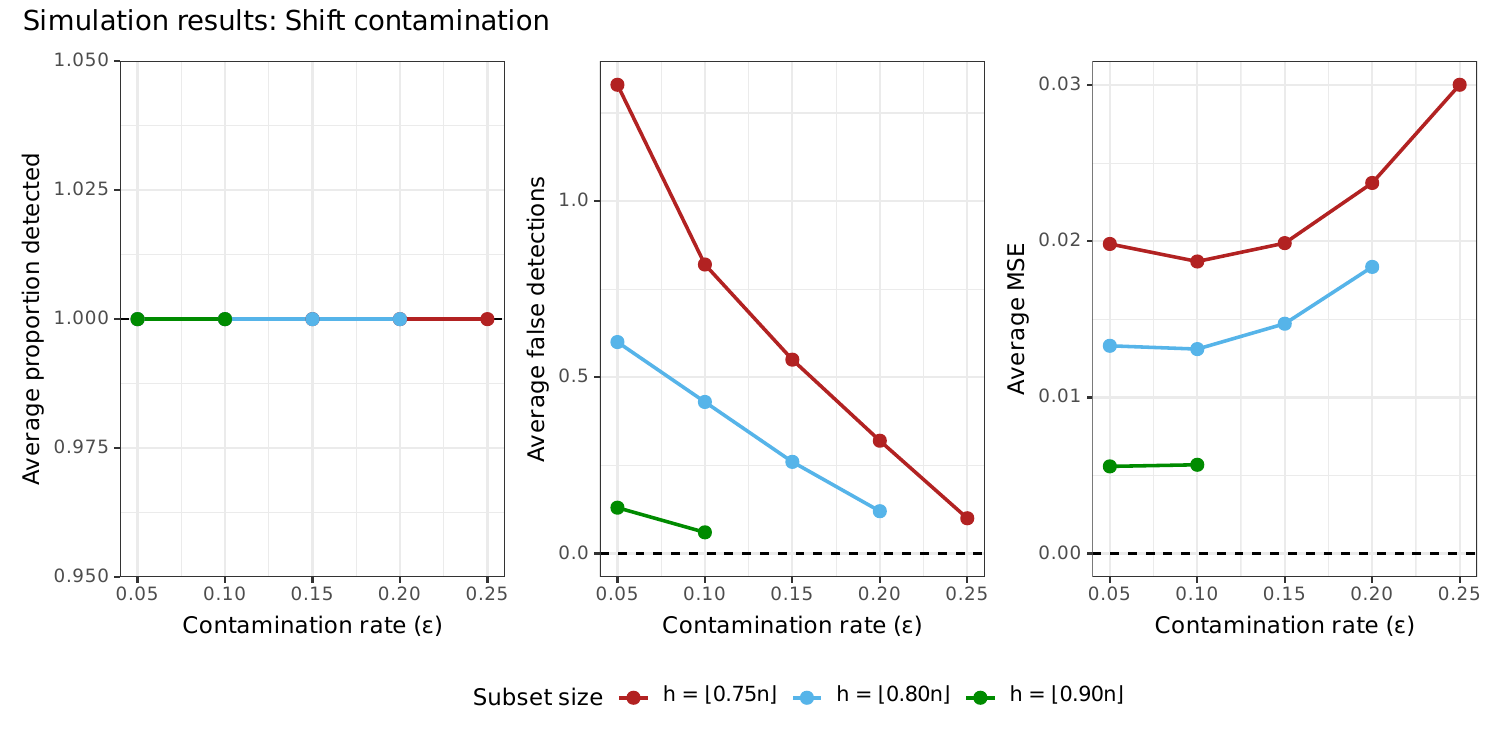}
    \caption{Average proportion of true outliers detected, average number of falsely detected observations, and average MSE for simulations on data with shift outliers.}
    \label{fig:shift_detection_performance}
\end{figure}

The setup of multiple shift contamination shows an average proportion of outliers being between $96\%$ and $99\%$ for the combinations of $\epsilon$ and $h$ assessed. A similar pattern as with shift contamination in terms of the average number of falsely detected observations is evident from the middle panel of Figure~\ref{fig:multishift_detection_performance}, whereas the average MSE presents a decreasing trend for increasing $\epsilon$ and does not exceed 0.02.

\begin{figure}[!ht]
    \centering
    \includegraphics[width=\linewidth]{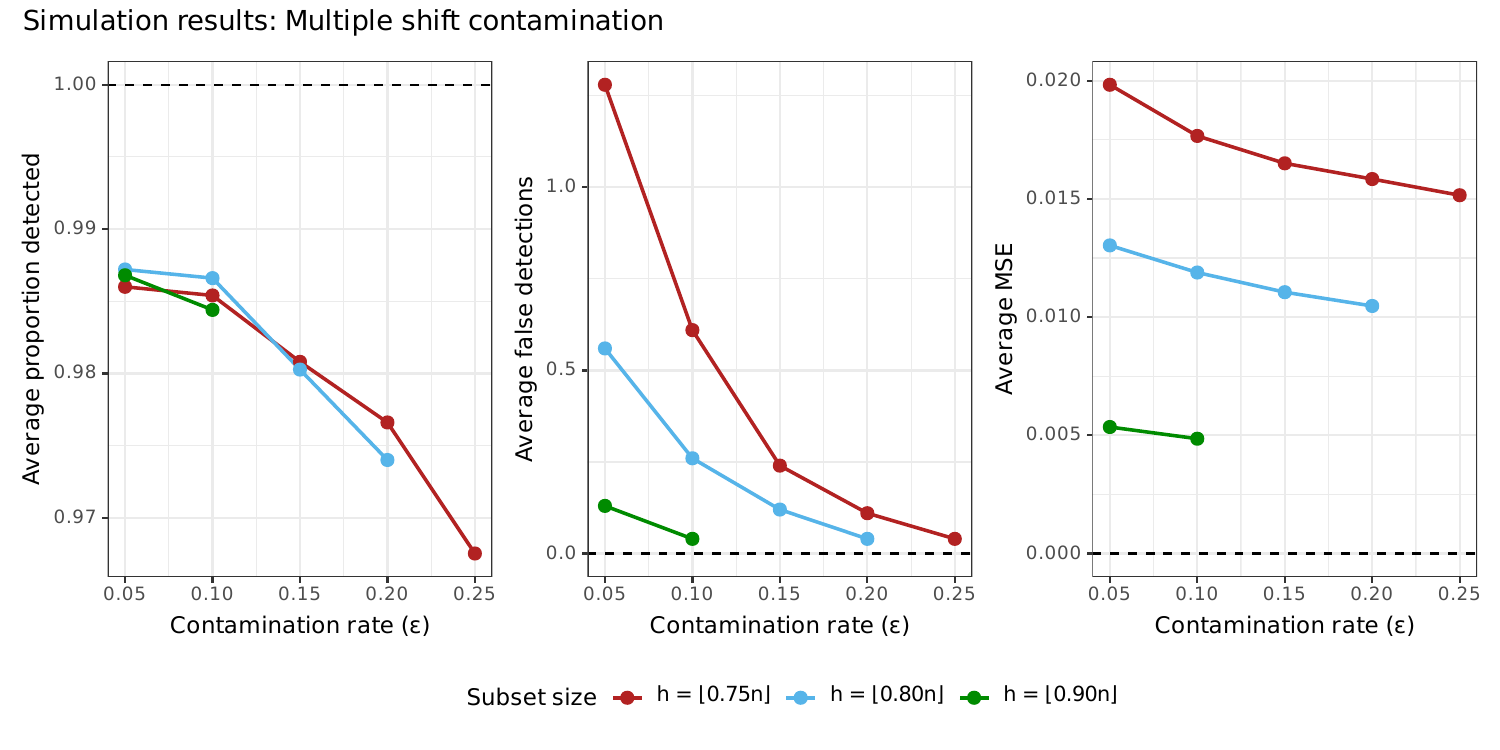}
    \caption{Average proportion of true outliers detected, average number of falsely detected non-outliers, and average MSE for simulations on data with shift outliers in multiple directions.}
    \label{fig:multishift_detection_performance}
\end{figure}

The setup with $t_2$ contamination turned out to be challenging. As can be seen in Figure~\ref{fig:t2_detection_performance}, the average proportion of detected outliers is just below $90\%$ in the best possible case, with a minimum average detection rate of $72.5\%$ for $\epsilon = 0.25$ and $h = \lfloor 0.75n \rfloor$ (in this case, the probability is very high to have outliers in the MCD $h$-subset). Still, the majority of outliers is detected, and the average number of falsely detected non-outliers is extremely low with just one observation being erroneously flagged across all simulated data sets. The average MSE is again near-zero, indicating a satisfactory estimation level of the correlation structure. What makes this situation particularly hard is that the $t_2$ distribution will generate outliers very close to the border of the outlier region that are not visibly separated from the non-outliers.

\begin{figure}[!ht]
    \centering
    \includegraphics[width=\linewidth]{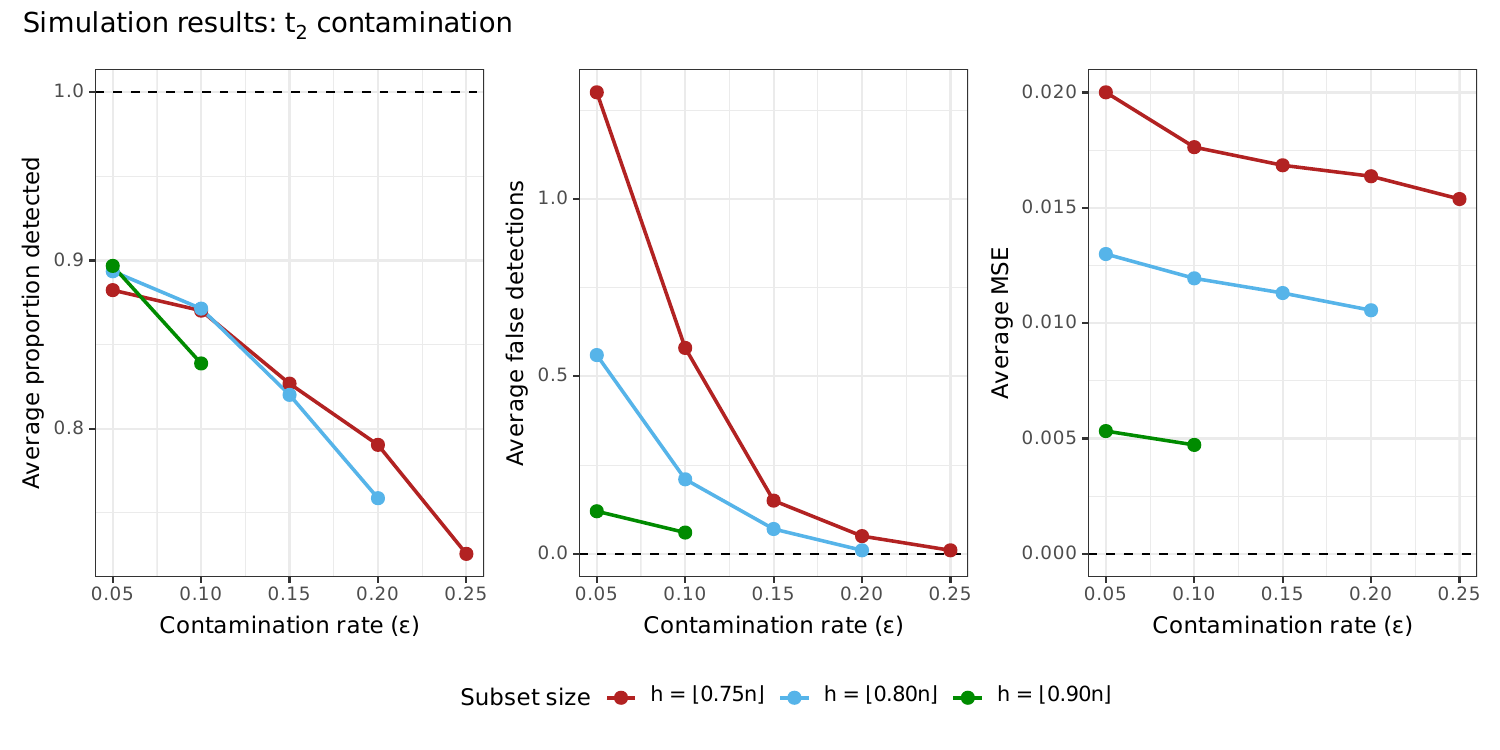}
    \caption{Average proportion of true outliers detected, average number of falsely detected non-outliers, and average MSE for simulations on data with $t_2$-distributed contamination.}
    \label{fig:t2_detection_performance}
\end{figure}

Results for the mixed correlation contamination setup can be seen in Figure~\ref{fig:mixedcorr_detection_performance}. The average proportion of true outliers detected drops sharply at $\epsilon = 0.25$, while both the average number of false detections and the MSE increase for $\epsilon > 0.15$. This decline in performance stems from the diminishing separability of ordinal categories in the latent Gaussian space under higher contamination. Nevertheless, the mean number of incorrectly detected observations and the average MSE remain low. Remarkably, only one non-outlier was flagged as outlying across the hundred data sets with contamination rate $\epsilon = 0.10$ when $h = \lfloor 0.9n \rfloor$.

\begin{figure}[!ht]
    \centering
    \includegraphics[width=\linewidth]{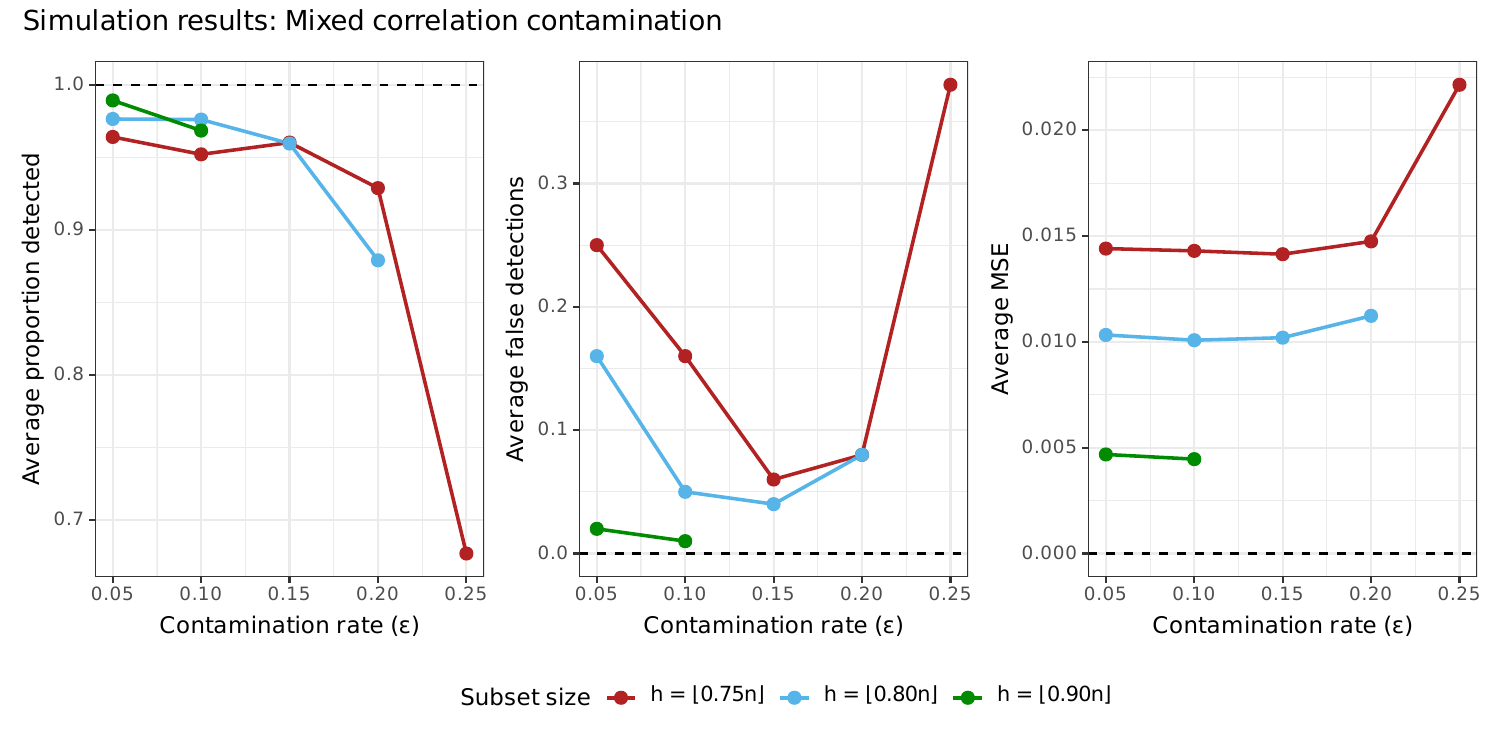}
    \caption{Average proportion of true outliers detected, average number of falsely detected non-outliers, and average MSE for simulations on data with mixed correlation outliers.}
    \label{fig:mixedcorr_detection_performance}
\end{figure}

Overall, MCD-omix shows high detection rates for the four contamination types examined. Furthermore, the number of false detections remains low, while the estimated covariance matrix provides accurate and robust recovery of the underlying correlation structure, even under substantial contamination.

\section{Application on Airbnb data}\label{sec:airbnb}

We illustrate the practical utility of MCD-omix by applying it to a publicly available data set of Airbnb listings in London for the weekdays of 2021 \citep{gyodi_airbnb_2021}. The year 2021 represents a transitional period post-COVID where short-term rental patterns began to stabilize, making the data a meaningful reference point. This data set has been previously used in various studies of spatial regression and clustering models, with a focus on the impact of location to listing prices \citep[see for example][]{gyodi_determinants_2021}. We are not aware of an application of outlier identification to these data. After a brief description of the data and the preprocessing steps taken, we demonstrate how our proposal identifies outlying observations. Moreover, the results reveal some interesting associations that are not identified by spatial models and which could therefore be useful additional criteria for the choice of accommodation for guests, as well as for hosts to set reasonable listing prices. 

\subsection{Data set description}\label{subsec:airbnb_descr}

The original data set comprises 4614 observations and twenty features. We exclude the observation identifier and two index features for which their normalized versions are retained instead. Overall guest satisfaction, initially recorded on a 1-100 integer scale, is discretized into an ordinal variable with six categories defined by the thresholds $50.5, 60.5, 70.5, 80.5, 90.5$. The original data set includes two binary features indicating whether the host offers between two and four listings and whether the host offers more than four listings. These are combined into a single ordinal variable with three categories; one listing, two to four listings, or more than four listings. Finally, four of the continuous variables (listing price, restaurant index, attraction index, and distance from nearest underground station) are log-transformed to account for their original skewness.

The final data set on which MCD-omix is applied consists of sixteen variables; seven continuous and nine ordinal features taking from two to nine categories. A description of the variables can be found in the Supplementary Material. Since there exist no labels indicating if a listing is outlying, we run our method with a rather low $h = \lfloor 0.75n \rfloor$ to allow for a moderate number of outliers. We use a hundred random $h$-subset initialisations and allow up to fifty iterations until convergence. Continuous features are standardized to zero median and unit MAD. Matrix regularisation is implemented at each iteration, ensuring that the covariance matrix has a condition number of at most fifty. We use $\beta = 0.05$ for outlier identification.

\subsection{Results}\label{subsec:airbnb_res}

The MCD-estimated correlation matrix is shown in Figure~\ref{fig:airbnb_corrmatrix}.  Some of the correlations are substantial.
Note that the variable indicating room sharing appears independent of the remaining features; this is due to its high class imbalance, with only 23 listings being room shares, none of which appears in the final MCD $h$-subset. We observe a strong positive correlation between the attraction and restaurant indices, as well as between the type of room and if the room that is being listed is private. Furthermore, the overall satisfaction of guests is positively correlated with the superhost status and the cleanliness rating. The price of the listings is positively correlated with the number of bedrooms and the number of people it can accommodate, but is negatively correlated with the room type and whether that is a private room. This is due to the latent ordering we have imposed on the ordinal features; a ``smaller'' category corresponds to non-private rooms and entire home/apartments, respectively. As expected, listings close to restaurants and tourist attractions seem to be more expensive, as well as listings nearby underground stations and closer to the city centre.

\begin{figure}[!ht]
    \centering
    \includegraphics[width=\linewidth]{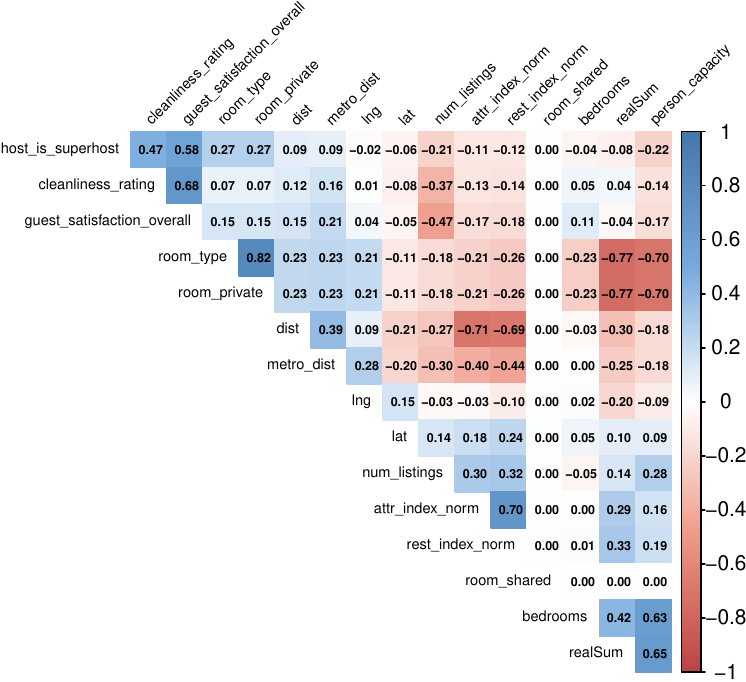}
    \caption{Estimated correlation matrix of Airbnb data set.}
    \label{fig:airbnb_corrmatrix}
\end{figure}

MCD-omix detects 33 observations, representing just 0.7\% of the listings in the data set. We provide a map of the listings, colored by their estimated Mahalanobis distance in Figure~\ref{fig:airbnb_map}. The highlighted observations represent the listings with Mahalanobis distance exceeding the critical value of 51.96 and therefore flagged as outlying. Outlying listings are evenly spread north and south of the Thames river, with Greenwich being the borough including the most outliers; nine in total. This can be attributed partly to its more easterly location compared to the rest of the boroughs. The most outlying observation is located in the borough of Southwark. Quite remarkably, the listing is a private room that can accommodate up to two people and costs almost 13000 euros for two people and two nights. The host does not hold the superhost status, whereas the cleanliness rating and overall guest satisfaction scores are only 7/10 and 80\%, respectively. This listing can be easily described as a ``tourist trap'', charging potential tenants an excessive amount of money for a private room in central London, while its ratings cannot justify its price.

\begin{figure}[!ht]
    \centering
    \includegraphics[width=\linewidth]{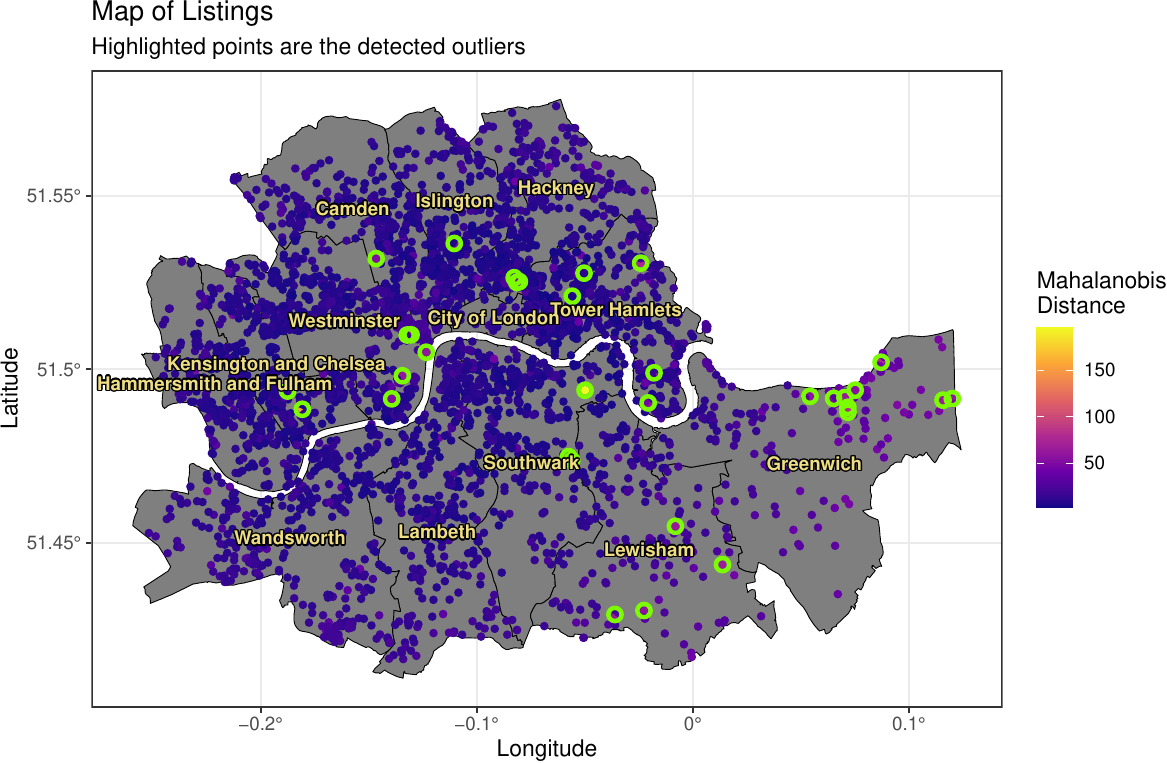}
    \caption{Map of listings in Airbnb data set, colored by their estimated Mahalanobis distance. The observations flagged as outlying are highlighted.}
    \label{fig:airbnb_map}
\end{figure}

The remaining outliers are characterized by abnormally high or low values in their continuous features, or unusual combinations of ordinal categories. More precisely, the second largest Mahalanobis distance corresponds to a listing in Kensington and Chelsea that costs nearly as much as the top outlier from Southwark. This one is also a private room in a three-bedroom apartment that can accommodate up to two people, with overall guest satisfaction score in the range $[80, 90)$. Over 60\% of the listings have the highest guest satisfaction level between $90$ and $100$, indicating that the listing price is rather unsound. Many of the outliers detected have surprisingly low satisfaction scores, despite high cleanliness ratings; this is particularly evident in the borough of Greenwich. Additionally, one listing in Tower Hamlets is advertised as an entire home/apartment that can accommodate three people but which has no bedrooms. It is also one of the very few accommodations with the lowest possible cleanliness and guest satisfaction ratings.

The analysis conducted on the London Airbnb data set has provided some insights regarding the advertised listings, offering a two-fold benefit; for guests and for hosts. We have detected instances that can be considered as ``tourist traps'' as they rely on first time travellers' ignorance to set unreasonably high prices for centrally located listings. Moreover, we identified accommodations with atypical combinations of feature values, such as high cleanliness ratings but low overall guest satisfaction scores; two variables that are usually positively correlated. Hosts can also take advantage of these results to make data-driven decisions in order to attract more guests, for example by setting more sensible prices.

\section{Conclusion}\label{sec:conclusion}

We have introduced an extension of the MCD estimator for outlier detection in data sets consisting of both continuous and ordinal features. Our proposal is based on the assumption that the observed ordinal categories have been generated by thresholding latent Gaussian variables. This allows estimating the covariance matrix of a latent multivariate Gaussian random variable that is assumed to have generated the ordinal data prior to discretisation. Combined with an MCD-based robust estimation of mean and covariance matrix of the continuous variables and the polyserial correlation between continuous and ordinal features we get robust overall Mahalanobis distances, which we use to define our outlier identifier MCD-omix. 

Extensive simulations have been conducted on data with four different types of contamination, as well as on outlier-free data. The results reveal a high mean rate of detection of outlying observations, with very few non-outliers being flagged as outliers. MCD-omix has also been applied to a data set of Airbnb listings in London, revealing insightful associations among the features and identifying unusual patterns in some of the advertised accommodations.

Future research may extend these ideas to the case of mixed-type data, involving not just continuous and ordinal, but also nominal features. Nominal features can be operationalized using category dummies. One may think of just treating category dummies as ordinal variables and apply MCD-omix, but this is problematic because the assumption of latent Gaussianity implies that all dependence between variables is captured by correlations, i.e., binary dependence between pairs of variables. This is not appropriate for category dummies of categorical variables with more than two categories; the fact that only one out of more than two categories can obtain implies non-binary dependence.

Theoretical properties of MCD-omix may be explored, such as its asymptotic distribution. It would also be interesting to have a deterministic routine like the one used in detMCD \citep{hubert2012deterministic} that guarantees reproducibility of the results. 


\section{Disclosure statement}\label{disclosure-statement}

The authors report there are no competing interests to declare.

\section{Data Availability Statement}\label{data-availability-statement}

The original Airbnb data is available at the following URL: \url{https://doi.org/10.5281/zenodo.4446043}.

\section{Statement of Use of Generative AI Tools}\label{ai-statement}
ChatGPT-4o has been used exclusively for spelling and grammar checking.

\phantomsection\label{supplementary-material}
\bigskip

\begin{center}

{\large\bf SUPPLEMENTARY MATERIAL}

\end{center}

\begin{description}
\item[R code \& outputs:]
R code for generating synthetic data sets and running the proposed approach, functions for post-processing results and visualisations, data cleaning pipeline for the Airbnb data set and application of MCD-omix, output files and plots (.zip file).
\item[Cleaned Airbnb data set:]
Cleaned Airbnb data set (.csv file).
\item[Description of variables in cleaned Airbnb data set:] Detailed description of the features in the cleaned Airbnb data set (.pdf file).
\end{description}

\bibliography{costahennigmixedout.bib}

\section*{Appendix: Proof of Theorem \ref{t:breakdown}}

The theorem will directly follow from Lemma \ref{lem:final-step}, which is preceded by some preparatory Lemmas.

%

\begin{lemma}[Bounded threshold estimates]
\label{lem:thresholds} Using the notation of Section \ref{subsec:latentprojection} for any dataset $\mathcal{X}_n$,
for each ordinal variable $j\in\{p_C+1,\dots,p\}$ and each threshold index $k\in\{1,\dots,\ell_j-1\}$, 
let $m^-_j\ge 1$ the smallest category so that $|\{x_{ij}=m^-_j:\ \mathbf{x}_i\in \mathcal X_n\}|>0$ and $m^+_j\le \ell_j$ the largest category so that $|\{x_{ij}=m^+_j:\ \mathbf{x}_i\in \mathcal X_n\}|>0$. Then for $m^-_j\le m\le m^+_j-1$:
$$
\Phi^{-1}\left(\frac{1}{n}\right) \le \hat\tau_j^m\le \Phi^{-1}\left(\frac{n-1}{n}\right).
$$
\end{lemma}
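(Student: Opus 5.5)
The plan is to read off the bounds directly from the threshold estimator \eqref{eq:thresholdestimation}. There the counts $P_{jk}$ have to be understood as relative frequencies, $P_{jk}=|\{x_{ij}=k\}|/n$. Otherwise $\Phi^{-1}(P^+_{jk})$ would not be defined, and the bounds in the statement are clearly stated on the relative-frequency scale. So $\hat\tau_j^m=\Phi^{-1}(P^+_{jm})$, where $P^+_{jm}=N^+_{jm}/n$ and $N^+_{jm}\defeq|\{i:\ x_{ij}\le m\}|$ is an integer in $\{0,\ldots,n\}$. Since $\Phi^{-1}$ is strictly increasing on $(0,1)$, it suffices to show $1/n\le P^+_{jm}\le (n-1)/n$, i.e. $1\le N^+_{jm}\le n-1$, for every $m$ with $m^-_j\le m\le m^+_j-1$.

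The lower bound is a counting step. Since $m\ge m^-_j$, every observation with $x_{ij}=m^-_j$ satisfies $x_{ij}\le m$. By definition of $m^-_j$ there is at least one such observation, so $N^+_{jm}\ge |\{x_{ij}=m^-_j\}|\ge 1$.

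The upper bound is symmetric. Since $m\le m^+_j-1<m^+_j$, no observation with $x_{ij}=m^+_j$ is counted in $N^+_{jm}$. There is at least one such observation, so $N^+_{jm}\le n-|\{x_{ij}=m^+_j\}|\le n-1$. Applying the monotone map $\Phi^{-1}$ then gives the two inequalities in the statement.

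There is no real obstacle. The only points that need care are the normalisation of $P_{jk}$ by $n$, which the paper leaves implicit and which I would state explicitly, and the remark that the range $m^-_j\le m\le m^+_j-1$ is exactly what keeps $P^+_{jm}$ away from $0$ and $1$. Outside this range the estimate is $\pm\infty$: for $m<m^-_j$ the cumulative count is $0$, and for $m\ge m^+_j$ it is $n$. These thresholds simply coincide with $\tau_j^0$ or $\tau_j^{\ell_j}$, correspond to empty categories, and play no role in the truncation sets $\mathcal{A}$ for observed categories. The point to emphasise for later use in the breakdown proof is that the bounds depend only on $n$, not on the data. They therefore hold uniformly for any contaminated $\mathcal{X}_n^*$, which will make the truncation regions, and hence $\mathbf{\hat{\tilde{x}}}^\text{O}$, bounded whenever the truncation interval for an observed category is finite on at least one side.
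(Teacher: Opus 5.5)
Your proof is correct and is the paper's argument made explicit: the paper only says the bounds follow directly from \eqref{eq:thresholdestimation}. You supply the counting behind it ($1\le N^+_{jm}\le n-1$ for $m^-_j\le m\le m^+_j-1$, then monotonicity of $\Phi^{-1}$) and the normalisation $P_{jk}=|\{i:\ x_{ij}=k\}|/n$ that the paper leaves implicit.

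One caution about your closing forward-looking remark, which the lemma does not need. For the extreme observed categories $m^-_j$ and $m^+_j$ the truncation interval is unbounded on at least one side (e.g.\ $\hat\tau_j^{m^-_j-1}=-\infty$). So the bounded relevant thresholds alone do not make $\hat{\tilde{x}}_j$ uniformly bounded; that also needs the conditional mean $\mathbf{\tilde{m}}$, i.e.\ the continuous coordinates of the observation, to stay bounded.
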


\begin{proof}
This follows directly from \eqref{eq:thresholdestimation}.
\end{proof}
Lemma \ref{lem:thresholds} will be used applied to $\mathcal{X}_n^*$, and the thresholds $\hat\tau_j^m$ for $m^-_j\le m\le m^+_j-1$ will be called ``relevant thresholds'' (excluding potential extreme categories for which there is no data in $\mathcal{X}_n^*$).

\begin{lemma}[Bounded standardisation of unchanged continuous observations]
\label{lem:standardisation}
For each continuous coordinate $j\in\{1,\dots,p_C\}$, there exist constants
$$
-\infty<a_j^{\mathrm{med}}\le b_j^{\mathrm{med}}<\infty,
\qquad
0<a_j^{\mathrm{MAD}}\le b_j^{\mathrm{MAD}}<\infty,
$$
depending only on $\mathcal X_n$ and $g$, such that for every contaminated sample obtained by replacing at most $g$ observations,
$$
a_j^{\mathrm{med}}\le \med_j^*\le b_j^{\mathrm{med}},
\qquad
a_j^{\mathrm{MAD}}\le \MAD_j^*\le b_j^{\mathrm{MAD}}.
$$
Consequently, for every unchanged observation index $i\in\mathcal U$,
$$
z_{ij}^*=\frac{x_{ij}-\med_j^*}{\MAD_j^*}
$$
is uniformly bounded.
\end{lemma}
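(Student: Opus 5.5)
The plan is to bound the contaminated median and MAD by order statistics of the original column $\mathbf{x}_{.j}$. The key fact is a standard interlacing property. Let $x_{(1)}\le\dots\le x_{(n)}$ be the ordered values of $\mathbf{x}_{.j}$, and let $x^*_{(1)}\le\dots\le x^*_{(n)}$ be those of the contaminated column. Replacing $g$ entries gives
$$
x_{(k-g)}\le x^*_{(k)}\le x_{(k+g)}, \qquad g<k\le n-g .
$$
This holds because at least $k-g$ unchanged values lie at or below $x^*_{(k)}$, and at most $k-1$ lie strictly below it.

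\textbf{Median.} The sample median is an average of order statistics with index near $n/2$. Since $g<n-h<n/2$, the indices $\lfloor (n+1)/2\rfloor\pm g$ lie in $\{1,\dots,n\}$. So I set
$$
a_j^{\mathrm{med}}=x_{(\lfloor (n+1)/2\rfloor-g)},\qquad b_j^{\mathrm{med}}=x_{(\lceil (n+1)/2\rceil+g)} .
$$
Both are finite and depend only on $\mathcal X_n$ and $g$.

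\textbf{Upper bound on the MAD.} Every unchanged observation satisfies
$$
|x_{ij}-\med_j^*|\le D_j:=\max(|x_{(1)}-a_j^{\mathrm{med}}|,\,|x_{(n)}-b_j^{\mathrm{med}}|),
$$
and at least $n-g>n/2$ observations are unchanged. Hence more than half of the absolute deviations are at most $D_j$, so their median is at most $D_j$. This gives $b_j^{\mathrm{MAD}}=1.4826\,D_j$.

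\textbf{Lower bound on the MAD (main obstacle).} Here Assumption~\ref{a:mad} is needed. Suppose $\MAD_j^*<\epsilon$. Then at least about $n/2$ of the contaminated values lie in the interval $[\med_j^*-\epsilon,\med_j^*+\epsilon]$. At most $g$ of these are replaced values, so at least $n/2-g$ original values $x_{ij}$ lie in an interval of length $2\epsilon$, whose centre is confined to $[a_j^{\mathrm{med}},b_j^{\mathrm{med}}]$.

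Now we need a count exceeding $v_j$. Since $g<n-h$, we have $n/2-g>n/2-(n-h)=h-n/2\ge v_j$. The count is $\lceil n/2\rceil-g$; the case where this equals exactly $v_j$ needs care with the $\le$ versus $<$ and the rounding conventions of the median, which I would resolve by the strict inequality $g<n-h$. So the interval contains at least $v_j+1$ original points. Not all of these can be equal, because at most $v_j$ share a value. Their spread is therefore at least
$$
\delta_j:=\min_{|S|=v_j+1}\bigl(\max_{i\in S}x_{ij}-\min_{i\in S}x_{ij}\bigr)>0,
$$
a finite minimum of positive numbers. Hence $2\epsilon\ge\delta_j$, and I can take $a_j^{\mathrm{MAD}}=1.4826\,\delta_j/2$, depending only on $\mathcal X_n$ and $g$.

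\textbf{Conclusion.} For unchanged $i\in\mathcal U$,
$$
|z^*_{ij}|\le \frac{\max(|x_{(1)}|,|x_{(n)}|)+\max(|a_j^{\mathrm{med}}|,|b_j^{\mathrm{med}}|)}{a_j^{\mathrm{MAD}}},
$$
which is a uniform bound.
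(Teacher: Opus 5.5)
Your proof is correct and follows the paper's route. The median and the upper MAD bound come from $g<n/2$; you make explicit, via order-statistic interlacing, what the paper gets by citing breakdown points. The lower MAD bound comes from the count $\lceil n/2\rceil-g>v_j$: at most $v_j$ unchanged points plus $g$ replaced points can sit near the median, and your $(v_j+1)$-subset range plays the role of the paper's minimal gap $\delta_j$ between distinct values.

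One slip: your $D_j$ pairs the wrong endpoints. An unchanged $x_{ij}=x_{(n)}$ with $\med_j^*$ near $a_j^{\mathrm{med}}$ can exceed it. Take $D_j=\max(x_{(n)}-a_j^{\mathrm{med}},\,b_j^{\mathrm{med}}-x_{(1)})$ instead, or simply $D_j=x_{(n)}-x_{(1)}$.
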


\begin{proof}
Due to $h> n/2$ and $g<n-h$, we have $g<n/2$, and the bounds for the median and the upper bound for the MAD follow from the well known finite sample breakdown points of median and MAD. Assumption \ref{a:mad} makes sure that for variable $j$ there can be at most $v_j+g<h-\frac{n}{2}+(n-h)=\frac{n}{2}$ observations in $\mathcal{X}_n^*$ that take the same value. Let 
$$
\delta_j\defeq\min\{|x_{i_1j}-x_{i_2j}|:\ \mathbf{x}_{i_1}, \mathbf{x}_{i_2}\in\mathcal{X}_n\},
$$
depending only on $\mathcal{X}_n$.
Let $x^M_j$ be either any observation on variable $j$ in $\mathcal{X}_n^*$ or any mean of two neighboring observations, i.e., any possible median of variable $j$ in $\mathcal{X}_n^*$. Then there are more than $\frac{n}{2}$ observations $x_{ij}^*,\ \mathbf{x}^*_i\in \mathcal{X}_n^*$ so that $|x_{ij}^*-x^M_j|\ge\frac{\delta_j}{2}$, therefore  $\MAD_j^*\ge \frac{\delta_j}{2}$.
\end{proof}

\begin{lemma}[Ordinal imputations are uniformly bounded]
\label{lem:ordinalbound}
There exists a finite constant $K_O$ depending only on $\mathcal X_n$ and $g$ such that, for every contaminated sample $\mathcal X_n^*$ obtained by replacing at most $g$ observations, for every observation index $i\in\{1,\dots,n\}$, and for every parameter triple $(\boldsymbol{\mu},\boldsymbol{\Sigma},\boldsymbol{\tau})$ satisfying:
\begin{enumerate}
\item $\boldsymbol{\tau}$ belongs to the compact set of relevant thresholds determined by Lemma~\ref{lem:thresholds} w.r.t. $\mathcal X_n^*$;
\item $\boldsymbol{\Sigma}$ is positive definite and $\kappa(\boldsymbol{\Sigma})\le \kappa^*$;
\item the diagonal entries of $\boldsymbol{\Sigma}^{OO}$ are equal to $1$,
\end{enumerate}
the conditional expectation
$$
\hat{\mathbf{x}}_i^{O,*}=\mathbb E[\tilde{\mathbf{X}}_i^O\mid \mathbf{X}_i,\boldsymbol{\mu},\boldsymbol{\Sigma},\boldsymbol{\tau}]
$$
satisfies
$$
\|\hat{\mathbf{x}}_i^{O,*}\|\le K_O.
$$
\end{lemma}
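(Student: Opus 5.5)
The plan is to bound $\hat{\mathbf{x}}_i^{O,*}$ coordinate by coordinate, using only that it is the mean of a truncated Gaussian supported on the box $\mathcal{A}(\mathbf{X}_i^O,\boldsymbol{\tau})$. Fix an ordinal coordinate $j\in\{p_C+1,\dots,p\}$ and write $k=x_{ij}^*$ for its observed category in $\mathcal X_n^*$. Since $\tilde{\mathbf{X}}_i^O\mid\mathbf{X}_i$ is almost surely contained in $\prod_{j}[\hat\tau_j^{k-1},\hat\tau_j^{k}]$, its conditional expectation lies in the same box, because the box is convex. Consequently
$$
\hat\tau_j^{x_{ij}^*-1}\le \hat x_{ij}^{O,*}\le \hat\tau_j^{x_{ij}^*}.
$$
Because $x_{ij}^*$ is an observed category of $\mathcal X_n^*$, we have $m_j^-\le x_{ij}^*\le m_j^+$. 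Whenever a bounding threshold is finite, it is a relevant threshold, and Lemma~\ref{lem:thresholds} places it in $[\Phi^{-1}(1/n),\Phi^{-1}((n-1)/n)]$. The resulting bound depends only on $n$, and hence only on $\mathcal X_n$.

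\textbf{Interior categories.} If $m_j^-<x_{ij}^*<m_j^+$, both bounding thresholds are relevant. Then $|\hat x_{ij}^{O,*}|\le \Phi^{-1}((n-1)/n)$ directly, uniformly in $(\boldsymbol{\mu},\boldsymbol{\Sigma})$ and in $i$.

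\textbf{Extreme categories.} The remaining case is $x_{ij}^*=m_j^+$ (the case $m_j^-$ is symmetric). Here the one-sided bound $\hat x_{ij}^{O,*}\ge \Phi^{-1}(1/n)$ is immediate, but the upper side is the half-line $[\hat\tau_j^{m_j^+-1},\infty)$. For this case I would condition on the other ordinal coordinates and use the tower property. Given them, $\tilde X_{ij}$ is a univariate Gaussian with mean $m_j$ and variance $s_j^2\le \Sigma_{jj}=1$, truncated to $[a,\infty)$ with $a$ a relevant threshold. Its mean is
$$
m_j+s_j\,\frac{\phi((a-m_j)/s_j)}{1-\Phi((a-m_j)/s_j)}.
$$
The standard Mills-ratio inequality bounds this by $\max(a,m_j)+c\,s_j$ for an absolute constant $c$, and $s_j\le 1$ by condition 3.

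It therefore remains to bound $m_j$. By the regression formula, $m_j$ is a linear combination of $\mathbf{z}_i^C-\boldsymbol{\mu}^C$ and of the other (already boxed) ordinal latent values. Its coefficients are entries of $\boldsymbol{\Sigma}_{j,\cdot}\boldsymbol{\Sigma}_{-j,-j}^{-1}$. These are bounded in terms of $\kappa^*$, because the unit diagonal together with $\kappa(\boldsymbol{\Sigma})\le\kappa^*$ pins all eigenvalues of $\boldsymbol{\Sigma}$ into $[1/\kappa^*,\kappa^*]$.

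\textbf{Main obstacle.} This is the step I expect to be hardest. The continuous part $\mathbf{z}_i^C$, and the parameter $\boldsymbol{\mu}^C$, enter $m_j$, and for replaced observations $\mathbf{z}_i^C$ is a priori unrestricted. What I would try first is to exploit the structure of the parameter set actually fed to the imputation in the algorithm: there $\boldsymbol{\mu}^C$ is a mean over an $h$-subset and the regression coefficients are bounded via $\kappa^*$. I would then attempt to show that the contribution of $\mathbf{z}_i^C-\boldsymbol{\mu}^C$ is absorbed because the truncated mean on a half-line differs from the untruncated one only when $m_j$ is below $a$.

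If this absorption argument does not go through, the bound must be carried as $K_O$ plus a term controlled by the continuous coordinates. That is the point at which the proof of Theorem~\ref{t:breakdown} has to split between unchanged observations, where Lemma~\ref{lem:standardisation} bounds $\mathbf{z}_i^C$, and the replaced ones. The interior-category and one-sided bounds above are fully uniform in any case, and I would present them first.
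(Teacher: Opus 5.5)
Your interior-category bound is correct and fully uniform: the conditional mean lies in the convex box, and both endpoints are relevant thresholds. But the proposal does not prove the lemma. The extreme-category case is left open, and it cannot be closed, because the lemma as stated is false.

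Hypotheses 1--3 place no restriction on $\boldsymbol{\mu}$, and for a replaced index the continuous part of the observation is arbitrary. Take $p_C=p_O=1$, unit variances, correlation $\rho\in(0,1)$ with $(1+\rho)/(1-\rho)\le\kappa^*$, and $\boldsymbol{\mu}=\mathbf{0}$. Let a replaced observation have continuous value $z$ and lie in the top category $\ell_j$. Its imputation is $\mathbb{E}[Y\mid Y\ge a]$ with $Y\sim\mathcal{N}(\rho z,1-\rho^2)$, where $a$ is either a relevant threshold or $-\infty$. This is at least $\rho z$, so no finite $K_O$ exists. Letting $\mu^C\to-\infty$ does the same to an unchanged observation in category $m_j^+$.

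The same computation disposes of your absorption idea. Truncation to $[a,\infty)$ can only raise the mean, so it tames $m_j\to-\infty$ but never $m_j\to+\infty$. The paper's own proof has exactly this hole, hidden in the claim that the parameters range over a compact admissible class. The location $\tilde{\mathbf{m}}=\boldsymbol{\Sigma}^{\text{OC}}(\boldsymbol{\Sigma}^{\text{CC}})^{-1}(\mathbf{x}_i^{C}-\boldsymbol{\mu}^{C})$ of the truncated Gaussian is not confined to any bounded set, so continuity of the truncated mean yields no bound.

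Your fallback is the right repair: let $K_O$ depend in addition on a bound $L\ge\|\mathbf{x}_i^{C}-\boldsymbol{\mu}^{C}\|$. This weaker version suffices for how the lemma is used in Lemmas~\ref{lem:large-det} and~\ref{lem:final-step}. They only need that a bounded continuous part forces a bounded imputation, and hence a bounded $M$.

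Even for this version, your tower-property step is circular when two or more ordinal coordinates of the observation lie in extreme categories. The ``already boxed'' coordinates entering $m_j$ then range over half-lines, so bounding $\mathbb{E}[\max(a,m_j)\mid\mathcal{A}]$ presupposes the bound you are proving. A non-circular finish goes as follows.
\begin{itemize}
\item $\|\tilde{\mathbf{m}}\|\le(\kappa^*)^2L$.
\item By interlacing, the Schur complement $\tilde{\mathbf{S}}$ has eigenvalues in $[1/\kappa^*,\kappa^*]$.
\item Each side of the box $\mathcal{A}$ is one of three things: an interval of width at least $\sqrt{2\pi}/n$ inside $[\Phi^{-1}(1/n),\Phi^{-1}((n-1)/n)]$, a half-line with endpoint in that range, or all of $\mathbb{R}$.
\end{itemize}
Hence $P(\mathcal{A})\ge\pi_0>0$, with $\pi_0$ depending only on $n,p_O,\kappa^*,L$. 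Then $|\mathbb{E}[\tilde X_j\mid\mathcal{A}]|\le\mathbb{E}|\tilde X_j|/\pi_0$, which is bounded.
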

\begin{proof}
For every admissible parameter triple $(\boldsymbol{\mu}, \boldsymbol{\Sigma}, \boldsymbol{\tau})$, the chosen Gaussian scoring rule from Expression (8) implies that
$$
\hat{\mathbf x}_i^{O,*} = \mathbb{E}[\tilde{\mathbf{X}}_i^O\mid \mathbf{X}_i,\boldsymbol{\mu}, \boldsymbol{\Sigma}, \boldsymbol{\tau}]
$$
is the expectation of a multivariate truncated Gaussian distribution. By Lemma~\ref{lem:thresholds}, the relevant threshold vector $\boldsymbol{\tau}$ belongs to a compact set depending only on $\mathcal{X}_n$ and $g$. The observed ordinal category vector $\mathbf{X}_i^O$ takes values in a finite set. Moreover, the admissible covariance matrices $\boldsymbol{\Sigma}$ are positive definite and satisfy $\kappa(\boldsymbol{\Sigma})\le \kappa^*$, so they range over a regular bounded family. Therefore the family of truncated Gaussian distributions in the definition of $\hat{\mathbf{x}}_i^{O,*}$ ranges over an admissible parameter class depending only on $\mathcal{X}_n$ and $g$. For a multivariate truncated Gaussian distribution, the expectation is finite and depends continuously on the mean vector, covariance matrix, and truncation limits. It follows that $(\boldsymbol{\mu}, \boldsymbol{\Sigma}, \boldsymbol{\tau}, \mathbf{X}_i^O) \mapsto \hat{\mathbf{x}}_i^{O,*}$ is continuous on the admissible parameter class, thus the image of the latter under the continuous imputation map is bounded. Hence there exists a finite constant $K_O$, depending only on $\mathcal{X}_n$ and $g$, such that
$$
\|\hat{\mathbf{x}}_i^{O,*}\|\le K_O
$$
for every contaminated sample $\mathcal{X}_n^*$ obtained by replacing at most $g$ observations, for every observation index $i\in\{1,\dots,n\}$, and for every admissible parameter triple $(\boldsymbol{\mu},\boldsymbol{\Sigma},\boldsymbol{\tau})$.
\end{proof}

\begin{lemma}[Clean $h$-subsets have uniformly bounded determinant]
\label{lem:clean-det}
Let $\mathcal G\subset\{1,\dots,n\}$ be the set of the indices of the replaced observations in $\mathcal{X}_n$ so that $|\mathcal G|=g$, and let
$\mathcal U\defeq \{1,\dots,n\}\setminus\mathcal G$ be the set of indices of
unchanged observations.
Let
$$
\mathbb{H}_{\mathrm{clean}}
\defeq 
\{\mathcal H\subset\mathcal U:\ |\mathcal H|=h\}.
$$
$\mathbb{H}_{\mathrm{clean}}$ is non-empty, because $g<n-h$, i.e., more than $h$ observations from $\mathcal{X}_n$ remain in $\mathcal{X}_n^*$.
Then there exists a finite constant $B=B(\mathcal X_n,g)$ such that for every $\mathcal H\in\mathbb{H}_{\mathrm{clean}}$,
$$
\det(\mathbf{S}_{\mathcal H}^*)\le B.
$$
\end{lemma}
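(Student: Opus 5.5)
The plan is to bound $\det(\mathbf{S}_{\mathcal H}^*)$ by a power of the largest eigenvalue, and to bound that eigenvalue by the trace of $\mathbf{S}_{\mathcal H}^*$. By \eqref{eq:defsh}, $\mathbf{S}_{\mathcal H}^*$ is positive definite, so
$$
\det(\mathbf{S}_{\mathcal H}^*)\le \lambda_{\max}(\mathbf{S}_{\mathcal H}^*)^p\le \bigl(\mathrm{tr}(\mathbf{S}_{\mathcal H}^*)\bigr)^p .
$$
It therefore suffices to bound the diagonal entries of $\mathbf{S}_{\mathcal H}^*$ uniformly over all $\mathcal H\in\mathbb{H}_{\mathrm{clean}}$ and all admissible contaminations.

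For the ordinal coordinates $j>p_C$, $\mathbf{V}_{\mathcal H}$ has entry $1$ and $\mathbf{R}_{\mathcal H}$ has unit diagonal. The diagonal entry is then $(1-\lambda)c(h,p)+\lambda\le \max(c(h,p),1)$, with no dependence on the data. This also covers the convention of setting correlations to zero when only one category is present.

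For the continuous coordinates $j\le p_C$, the diagonal entry is $(1-\lambda)c(h,p)\,s_j^2+\lambda$. Here $s_j^2$ is the sample variance within $\mathcal H$ of the standardized values $z^*_{ij}=(x_{ij}-\med_j^*)/\MAD_j^*$. Since $\mathcal H\subset\mathcal U$, every such value comes from an unchanged observation. By Lemma~\ref{lem:standardisation}, each satisfies $|z^*_{ij}|\le C_j$, with $C_j$ depending only on $\mathcal X_n$ and $g$. Explicitly, one can take
$$
C_j=\max_{i}\frac{\max\bigl(|x_{ij}-a_j^{\mathrm{med}}|,\,|x_{ij}-b_j^{\mathrm{med}}|\bigr)}{a_j^{\mathrm{MAD}}}.
$$
Hence $s_j^2\le \frac{h}{h-1}C_j^2$, since the variance is at most the mean square.

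Combining the two cases, $\mathrm{tr}(\mathbf{S}_{\mathcal H}^*)\le p_O\max(c(h,p),1)+\sum_{j\le p_C}\bigl(c(h,p)\tfrac{h}{h-1}C_j^2+1\bigr)$. We then take $B$ to be the $p$-th power of this quantity.

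The only real issue is making sure that nothing in $\mathbf{S}_{\mathcal H}^*$ depends on the replaced observations in an unbounded way. The regularisation parameter $\lambda\in[0,1]$ only forms a convex combination with $\mathbf{I}_p$, so it is harmless. The thresholds, which are re-estimated on $\mathcal X_n^*$, enter only the polychoric and polyserial correlations. These are bounded by $1$ in absolute value and do not affect the diagonal. The median and MAD are computed on the contaminated data, and they are exactly what Lemma~\ref{lem:standardisation} controls, including the lower bound $a_j^{\mathrm{MAD}}>0$ obtained from Assumption~\ref{a:mad}. That lemma is the key ingredient. With it in hand, the remaining steps are routine.
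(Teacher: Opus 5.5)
Your proof is correct, but it takes a different and more economical route than the paper's. The paper argues that every transformed observation $\hat{\boldsymbol{z}}_i^*$, $i\in\mathcal H$, lies in a common compact set. It uses Lemma~\ref{lem:standardisation} for the continuous coordinates and Lemmas~\ref{lem:thresholds} and~\ref{lem:ordinalbound} for the ordinal imputations. From this it deduces that all entries of $\mathbf{S}_{\mathcal H}^*$ are finite, and then takes the maximum of $\det(\mathbf{S}_{\mathcal H}^*)$ over the finite family $\mathbb{H}_{\mathrm{clean}}$. You instead use $\det(\mathbf{S}_{\mathcal H}^*)\le(\mathrm{tr}\,\mathbf{S}_{\mathcal H}^*)^p$ and observe that only the diagonal matters. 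The ordinal-LG diagonal entries equal $(1-\lambda)c(h,p)+\lambda\le\max(c(h,p),1)$ by construction, and the continuous ones are controlled by Lemma~\ref{lem:standardisation} alone.

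This buys two things. First, it shows that Lemmas~\ref{lem:thresholds} and~\ref{lem:ordinalbound} are not needed for this step. The imputations $\hat{\mathbf{z}}_i^{O,*}$ never enter $\mathbf{S}_{\mathcal H}$, and the re-estimated thresholds affect only correlations, which lie off the diagonal. Second, it gives an explicit constant that is visibly uniform over all choices of $\mathcal G$ and all replacement values. The paper's closing step, a maximum over finitely many $h$-subsets, by itself only gives a finite bound for each fixed $\mathcal X_n^*$, so its uniformity has to be read off from the preceding compactness claim.

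One small correction: positive definiteness of $\mathbf{S}_{\mathcal H}^*$ does not follow from the form of \eqref{eq:defsh}. A matrix of pairwise polyserial and polychoric correlations need not be positive semidefinite. Positive definiteness comes from choosing $\lambda$ under the constraint $\kappa(\mathbf{S}_{\mathcal H})\le\kappa^*$, which the paper uses precisely to guarantee it. Even without it, your argument can be rescued. Since $|r_{jk}|\le 1$, each off-diagonal entry is at most $c(h,p)$ times the geometric mean of the corresponding entries of $\mathbf{V}_{\mathcal H}$, which you have already bounded. Hadamard's inequality would then bound $|\det(\mathbf{S}_{\mathcal H}^*)|$ uniformly.
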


\begin{proof}
Fix $\mathcal H\in\mathbb{H}_{\mathrm{clean}}$. By Lemma~\ref{lem:standardisation}, the standardized continuous coordinates of the unchanged observations in $\mathcal H$ are uniformly bounded. By Lemma~\ref{lem:thresholds}, the relevant threshold vector w.r.t. $\mathcal{X}_n^*$ belongs to a compact set. By Lemma~\ref{lem:ordinalbound}, the vectors corresponding to the projections/imputations of the ordered categories are uniformly bounded whenever the threshold vector lies in that compact set and the covariance matrix has condition number at most $\kappa^*$. Hence every transformed observation
$$
\hat{\boldsymbol{z}}_i^*
=
\bigl((\boldsymbol{z}_i^{C,*})^\top,(\hat{\boldsymbol{z}}_i^{O,*})^\top\bigr)^\top,
\ i\in\mathcal H,
$$
lies in a common compact subset of $\mathbb R^p$ depending only on $\mathcal X_n$ and $g$. Therefore all empirical means, all continuous sample variances, and all correlation coefficients entering $\mathbf{V}_{\mathcal H}$ and $\mathbf{R}_{\mathcal H}$ are bounded. It follows that every entry of
$$
\mathbf{S}_{\mathcal H}^*
=
(1-\lambda_{\mathcal H})\,c(h,p)\,\mathbf{V}_{\mathcal H}^{1/2}\mathbf{R}_{\mathcal H}\mathbf{V}_{\mathcal H}^{1/2}
+
\lambda_{\mathcal H}\mathbf{I}_p
$$
is finite. Since the family $\mathbb{H}_{\mathrm{clean}}$ is finite, the set
$$
\{\det(\mathbf{S}_{\mathcal H}^*):\mathcal H\in\mathbb{H}_{\mathrm{clean}}\}
$$
is finite. Hence its maximum
$$
B\defeq \max_{\mathcal H\in\mathbb{H}_{\mathrm{clean}}}\det(\mathbf{S}_{\mathcal H}^*)
$$
is finite.
\end{proof}

\begin{lemma}[Any $h$-subset containing a sufficiently extreme replaced observation has determinant exceeding $B$]
\label{lem:large-det}
There exists a finite constant $\eta_0=\eta_0(\mathcal X_n,g)$ such that, for every replaced index $i\in\mathcal G$ and every $h$-subset $\mathcal H$ with $i\in\mathcal H$,
$$
M(\hat{\mathbf{x}}_i^*;\mathcal X_n)>\eta_0
\Longrightarrow
\det(\mathbf{S}_{\mathcal H}^*)>B,
$$
where $B$ is the constant from Lemma~\ref{lem:clean-det}.
\end{lemma}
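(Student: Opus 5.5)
Since $\hat{\boldsymbol{\mu}}^{\mathrm{MCD}}(\mathcal X_n)$ and $\hat{\boldsymbol{\Sigma}}^{\mathrm{MCD}}(\mathcal X_n)$ are fixed quantities computed from the uncontaminated data, and $\kappa(\hat{\boldsymbol{\Sigma}}^{\mathrm{MCD}}(\mathcal X_n))$ is bounded through the regularisation and the fixed matrix $\mathbf M$, a large value of $M(\hat{\mathbf{x}}_i^*;\mathcal X_n)$ forces $\|\hat{\mathbf{x}}_i^*\|$ to be large. By Lemma~\ref{lem:ordinalbound}, the ordinal part satisfies $\|\hat{\mathbf{x}}_i^{O,*}\|\le K_O$, so some continuous coordinate $x^*_{ij}$, $j\le p_C$, must be large. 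The first step is to make this quantitative: writing $\lambda_{\max}$ for the largest eigenvalue of $\hat{\boldsymbol{\Sigma}}^{\mathrm{MCD}}(\mathcal X_n)$,
$$
\|\mathbf{x}_i^{C,*}\|^2 \ge \lambda_{\max}^{-1}\,\big(M(\hat{\mathbf{x}}_i^*;\mathcal X_n)\big) - C,
$$
where $C$ depends only on $\mathcal X_n$ and $K_O$. Hence there is a continuous $j$ with $|x^*_{ij}|\ge \sqrt{(M/\lambda_{\max}-C)/p_C}$. By Lemma~\ref{lem:standardisation}, the median and MAD used in $\mathcal X_n^*$ are uniformly bounded, with the MAD bounded away from $0$, so the standardised value $|z^*_{ij}|$ tends to infinity as $M\to\infty$, uniformly over all admissible contaminations.

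The second step turns a large $|z^*_{ij}|$ with $i\in\mathcal H$ into a large determinant. The sample variance $s_j^2$ of $\mathbf z^*_{\mathcal Hj}$ satisfies $s_j^2\ge (z^*_{ij}-\bar z_{\mathcal H j})^2\cdot\frac{h-1}{h^2}$. Roughly, one point far from the rest gives variance of order $(z^*_{ij})^2/h$ once $|z^*_{ij}|$ dominates the other values, because $\sum_k (z_k-\bar z)^2\ge \frac{h-1}{h}(z_i-\bar z_{-i})^2$. This part needs care: the other members of $\mathcal H$ may also be replaced and extreme. Even so, the identity $\sum_k(z_k-\bar z)^2\ge \frac{h-1}{h}(z_i-\bar z_{-i})^2$ does not directly give a lower bound when the others are equally extreme. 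I would use the fact that $|\mathcal H\cap\mathcal U|\ge h-g>0$ and that those unchanged points have bounded $z$. Then the sum of squared deviations is at least the sum over the pair consisting of $i$ and one bounded point, which is at least $(z^*_{ij}-z^*_{kj})^2/2$. So $s_j^2\ge c\,(z^*_{ij})^2$ minus a constant, with $c$ depending only on $h$.

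The third step, which I expect to be the main obstacle, is to pass from one large diagonal entry to a large determinant despite the regularisation, since $\lambda$ is chosen to cap the condition number at $\kappa^*$. Let $\mathbf T=(1-\lambda)c(h,p)\mathbf V^{1/2}\mathbf R\mathbf V^{1/2}$. If $\lambda<1$, then $\mathbf S_{\mathcal H}^*=\mathbf T+\lambda\mathbf I$ has largest eigenvalue at least $(1-\lambda)c(h,p)s_j^2+\lambda$, because $R_{jj}=1$. The condition number bound then gives smallest eigenvalue at least $\lambda_{\max}(\mathbf S^*_{\mathcal H})/\kappa^*$, and therefore
$$
\det(\mathbf S^*_{\mathcal H})\ge \lambda_{\max}(\mathbf S^*_{\mathcal H})^p/(\kappa^*)^{p-1}.
$$
It remains to bound $\lambda$ away from $1$. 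I would argue that $\lambda=1$ is never needed, since $\mathbf I$ already has condition number $1$. By continuity and the bisection choice, the minimal $\lambda$ makes $\kappa(\mathbf T+\lambda\mathbf I)=\kappa^*$ exactly, or else $\lambda=0$. In the first case, $(t_{\max}+\lambda)/(t_{\min}+\lambda)=\kappa^*$ with $t_{\min}\ge0$ gives $\lambda\le t_{\max}/(\kappa^*-1)$, and then $\lambda_{\max}(\mathbf S^*)=(1-\lambda)t'_{\max}+\lambda$, which grows with $s_j^2$ under either scaling. Writing this carefully (the factor $(1-\lambda)$ multiplies $\mathbf T$) yields $\lambda_{\max}(\mathbf S^*_{\mathcal H})\to\infty$ as $s_j^2\to\infty$. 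Combining the three steps, I would choose $\eta_0$ so that the resulting lower bound on $\det(\mathbf S^*_{\mathcal H})$ exceeds $B$ from Lemma~\ref{lem:clean-det}. All constants depend only on $\mathcal X_n$ and $g$.
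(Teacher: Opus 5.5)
\paragraph{Step 3 fails.} Your steps 1 and 2 are fine, but step 3 does not work, and writing it carefully does not fix it.

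Put $\mathbf A\defeq c(h,p)\mathbf V_{\mathcal H}^{1/2}\mathbf R_{\mathcal H}\mathbf V_{\mathcal H}^{1/2}$. This matrix does not depend on $\lambda$; let its eigenvalues be $a_1\le\dots\le a_p$, so that $\mathbf S^*_{\mathcal H}=(1-\lambda)\mathbf A+\lambda\mathbf I_p$. Your $\mathbf T$ already contains the factor $(1-\lambda)$. The relation $(t_{\max}+\lambda)/(t_{\min}+\lambda)=\kappa^*$ therefore only yields $\lambda/(1-\lambda)\le a_p/(\kappa^*-1)$. Your bound $\lambda\le t_{\max}/(\kappa^*-1)$ is the additive-ridge computation and does not keep $\lambda$ away from $1$. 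Solving the active constraint exactly gives
\[
1-\lambda=\frac{\kappa^*-1}{a_p-\kappa^* a_1+\kappa^*-1}.
\]
So if $a_p\to\infty$ while $a_1$ stays bounded, then $\lambda\to 1$ and $(1-\lambda)a_p\to\kappa^*-1$. That is, $\lambda_{\max}(\mathbf S^*_{\mathcal H})\to\kappa^*$, not $\infty$.

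This is exactly the situation your step 2 produces. Suppose the replaced point is extreme only in coordinate $j$ and the other members of $\mathcal H$ are unchanged. Then $\mathbf A_{-j,-j}$ stays bounded, so by Cauchy interlacing only $a_p$ diverges, and $\det(\mathbf S^*_{\mathcal H})\to\kappa^*$.

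More bluntly, $\lambda_{\min}(\mathbf S_{\mathcal H})=(1-\lambda)a_1+\lambda\le\max\{a_1,1\}$, and the cap then gives $\lambda_{\max}(\mathbf S_{\mathcal H})\le\kappa^*\max\{a_1,1\}$ for every $h$-subset. When $p_O\ge 1$, an ordinal diagonal entry of $\mathbf A$ equals $c(h,p)$. Hence $a_1\le c(h,p)$ and $\det(\mathbf S_{\mathcal H})\le(\kappa^*\max\{c(h,p),1\})^p$ for all $h$-subsets, clean or contaminated. So no $\eta_0$ can drive the determinant to infinity. The implication $M>\eta_0\Rightarrow\det(\mathbf S^*_{\mathcal H})>B$ fails for such configurations whenever $B>\kappa^*$, and nothing in the assumptions excludes that.

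\paragraph{Relation to the paper's proof.} Apart from this, your proposal follows the paper's proof step for step. Bounded ordinal imputations plus bounded median/MAD give some $|z^*_{ij}|\to\infty$. An unchanged member of $\mathcal H\setminus\{i\}$ then forces $s^2_{\mathcal H,j}\to\infty$; your pair bound $(y_1-\bar y)^2+(y_k-\bar y)^2\ge (y_1-y_k)^2/2$ is a valid and simpler substitute for the paper's pairwise identity. Finally comes the condition-number bound on the determinant.

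There is a harmless slip in step 1. Since $M\le\|\hat{\mathbf x}_i^*-\hat{\boldsymbol\mu}^{\mathrm{MCD}}(\mathcal X_n)\|^2/\lambda_{\min}(\hat{\boldsymbol\Sigma}^{\mathrm{MCD}}(\mathcal X_n))$, the factor should be $\lambda_{\min}(\hat{\boldsymbol\Sigma}^{\mathrm{MCD}}(\mathcal X_n))$, not $\lambda_{\max}^{-1}$.

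The paper makes the same unjustified leap at the last step (``the regularized matrix cannot remain in a bounded set \dots\ Therefore $\lambda_{\max}(\mathbf S^*_{\mathcal H})\to\infty$''). So you have reproduced its argument, flaw included. The argument would close for an additive ridge $\mathbf A+\lambda\mathbf I_p$: there $\lambda_{\max}\ge a_p\to\infty$, and your estimate $\det\ge\lambda_{\max}^p/(\kappa^*)^{p-1}$ finishes the proof. For the convex-combination regularization of \eqref{eq:defsh}, a different idea or a modified statement is needed.
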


\begin{proof}
Fix a replaced index $i\in\mathcal G$ and an $h$-subset $\mathcal H$ with $i\in\mathcal H$. By Lemma~\ref{lem:ordinalbound}, the ordinal imputation vector $\hat{\mathbf{x}}_i^{O,*}$ is uniformly bounded. Therefore
$$
M(\hat{\mathbf{x}}_i^*;\mathcal X_n)
=
(\hat{\mathbf{x}}_i^*-\hat{\boldsymbol{\mu}}^{\mathrm{MCD}}(\mathcal X_n))^\top
\bigl(\hat{\boldsymbol{\Sigma}}^{\mathrm{MCD}}(\mathcal X_n)\bigr)^{-1}
(\hat{\mathbf{x}}_i^*-\hat{\boldsymbol{\mu}}^{\mathrm{MCD}}(\mathcal X_n))
$$
can become arbitrarily large only if at least one continuous component of $\hat{\mathbf{x}}_i^*$ becomes arbitrarily large. By Lemma~\ref{lem:standardisation}, the medians and MADs used in the standardisation are uniformly bounded, with the MADs bounded away from $0$. Hence there exists a continuous coordinate $j\in\{1,\dots,p_C\}$ such that $
|z_{ij}^*|\to\infty$ as $M(\hat{\mathbf{x}}_i^*;\mathcal X_n)\to\infty$.
Denote the $h$-subset as $\mathcal H=\{i_1,\dots,i_h\}$, with $i_1=i$, and define $
y_r\defeq z_{i_rj}^*$, $r=1,\dots,h$. Since at most $g$ observations are replaced, at least $h-g$ observations indexed by $\mathcal H\setminus\{i\}$ are unchanged. By Lemma~\ref{lem:standardisation}, there exists $K_j<\infty$ such that $|z_{rj}^*|\le K_j$ for every unchanged observation indexed by $r$. Hence at least $h-g$ of the numbers $y_2,\dots,y_h$ satisfy $|y_r|\le K_j$.
Let
$$
\bar y\defeq \frac1h\sum_{r=1}^h y_r.
$$
Then the sample variance of the $j$th coordinate over the subset $\mathcal H$ is
$$
s_{\mathcal H,j}^2
\defeq 
\frac{1}{h-1}\sum_{r=1}^h (y_r-\bar y)^2.
$$
Using the identity
$$
\sum_{r=1}^h (y_r-\bar y)^2
=
\frac{1}{2h}\sum_{r=1}^h\sum_{s=1}^h (y_r-y_s)^2,
$$
and keeping only the terms involving $y_1$ and those $y_r$ with $r\ge 2$ and $|y_r|\le K_j$, we obtain
$$
\sum_{r=1}^h (y_r-\bar y)^2
\ge
\frac{1}{h}\sum_{r:\,r\ge 2,\ |y_r|\le K_j}(y_1-y_r)^2.
$$
For each such $r$, the reverse triangle inequality gives
$$
|y_1-y_r|
\ge
\bigl||y_1|-|y_r|\bigr|
\ge
(|y_1|-K_j)_+,
$$
where $u_+\defeq \max\{u,0\}$. Therefore
$$
(y_1-y_r)^2\ge (|y_1|-K_j)_+^2.
$$
Since there are at least $h-g$ such indices $r$, it follows that
$$
\sum_{r=1}^h (y_r-\bar y)^2
\ge
\frac{h-g}{h} (|y_1|-K_j)_+^2.
$$
Hence
$$
s_{\mathcal H,j}^2
\ge
\frac{h-g}{h(h-1)} (|y_1|-K_j)_+^2.
$$
In particular, $
s_{\mathcal H,j}^2\to\infty$ as $
|y_1|=|z_{ij}^*|\to\infty$.
Now define
$$
\mathbf{A}_{\mathcal H}\defeq  c(h,p)\,\mathbf{V}_{\mathcal H}^{1/2}\mathbf{R}_{\mathcal H}\mathbf{V}_{\mathcal H}^{1/2}.
$$
Then
$$
\mathbf{S}_{\mathcal H}^*
=
(1-\lambda_{\mathcal H})\mathbf{A}_{\mathcal H}+\lambda_{\mathcal H}\mathbf{I}_p.
$$
The $j$th diagonal entry of $\mathbf{A}_{\mathcal H}$ is $c(h,p)s_{\mathcal H,j}^2$, so
$$
(\mathbf{A}_{\mathcal H})_{jj}\to\infty.
$$
Since $\mathbf{A}_{\mathcal H}$ is symmetric positive semidefinite,
$$
\lambda_{\max}(\mathbf{A}_{\mathcal H})\ge (\mathbf{A}_{\mathcal H})_{jj}\to\infty.
$$
It follows that, under the condition number constraint
$\kappa(\mathbf{S}_{\mathcal H}^*)\le \kappa^*$, 
the regularized matrix $\mathbf{S}_{\mathcal H}^*$ cannot remain in a bounded set as
$M(\hat{\mathbf{x}}_i^*;\mathcal X_n)\to\infty$. Therefore $\lambda_{\max}(\mathbf{S}_{\mathcal H}^*)\to\infty$.
Since $\kappa(\mathbf{S}_{\mathcal H}^*)\le \kappa^*$, we have
$$
\lambda_{\min}(\mathbf{S}_{\mathcal H}^*)\ge \frac{\lambda_{\max}(\mathbf{S}_{\mathcal H}^*)}{\kappa^*}.
$$
Therefore
$$
\det(\mathbf{S}_{\mathcal H}^*)
=
\prod_{r=1}^p \lambda_r(\mathbf{S}_{\mathcal H}^*)
\ge
\lambda_{\min}(\mathbf{S}_{\mathcal H}^*)^{p-1}\lambda_{\max}(\mathbf{S}_{\mathcal H}^*)
\ge
\frac{\lambda_{\max}(\mathbf{S}_{\mathcal H}^*)^p}{(\kappa^*)^{p-1}}.
$$
Hence $\det(\mathbf{S}_{\mathcal H}^*)\to\infty$ as $M(\hat{\mathbf{x}}_i^*;\mathcal X_n)\to\infty
$.
Since $B<\infty$ by Lemma~\ref{lem:clean-det}, it follows that there exists
$\eta_0<\infty$ such that
$$
M(\hat{\mathbf{x}}_i^*;\mathcal X_n)>\eta_0
\Longrightarrow
\det(\mathbf{S}_{\mathcal H}^*)>B.
$$
\end{proof}

\begin{lemma}[Sufficiently extreme replaced observations are outliers]
\label{lem:final-step}
There exists a finite constant $\eta\ge \eta_0$ such that every replaced observation $\mathbf{x}_i^*$ satisfying
$$
M(\hat{\mathbf{x}}_i^*;\mathcal X_n)>\eta
$$
is identified as an outlier with respect to $\mathcal X_n^*$.
\end{lemma}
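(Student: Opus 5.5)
By Lemmas \ref{lem:clean-det} and \ref{lem:large-det}, taking $\eta\ge\eta_0$ forces any sufficiently extreme replaced observation out of the MCD $h$-subset of $\mathcal X_n^*$. It then remains to show that its robust distance with respect to $\mathcal X_n^*$ grows without bound, so that it eventually exceeds the fixed cutoff $c_n$.

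\textbf{Step 1: the extreme observations are excluded.} Lemma \ref{lem:clean-det} gives a clean $h$-subset with $\det(\mathbf S^*_{\mathcal H})\le B$. Hence the minimiser $\mathcal H^{\mathrm{MCD}*}$ in \eqref{eq:hsubsetmixed} satisfies $\det(\mathbf S^*_{\mathcal H^{\mathrm{MCD}*}})\le B$. If some replaced $i$ with $M(\hat{\mathbf x}_i^*;\mathcal X_n)>\eta_0$ belonged to $\mathcal H^{\mathrm{MCD}*}$, Lemma \ref{lem:large-det} would give determinant $>B$, a contradiction. So every such $i$ lies outside $\mathcal H^{\mathrm{MCD}*}$.

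\textbf{Step 2: uniform bounds on the estimates.} I would show that the estimates computed from $\mathcal H^{\mathrm{MCD}*}$ are uniformly bounded over all contaminations. The subset may still contain some replaced points with $M\le\eta_0$, but these have bounded standardized continuous coordinates: $M\le\eta_0$ together with the fixed, positive definite $\hat{\boldsymbol\Sigma}^{\mathrm{MCD}}(\mathcal X_n)$ bounds $\hat{\mathbf x}_i^*$. Lemma \ref{lem:standardisation} keeps the medians bounded and the MADs bounded away from $0$. Unchanged points are bounded by Lemma \ref{lem:standardisation}, and all ordinal imputations are bounded by Lemma \ref{lem:ordinalbound}.

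Hence all points of $\mathcal H^{\mathrm{MCD}*}$ lie in a fixed compact set. This bounds the mean, the variances and $\mathbf S^*_{\mathcal H^{\mathrm{MCD}*}}$ from above. Boundedness here does not need a small determinant: the trace is controlled by the coordinates directly.

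For the lower bound I would use the regularisation. Because $\lambda\mathbf I_p$ is added and $\kappa\le\kappa^*$, we have $\lambda_{\min}(\mathbf S^*)\ge\lambda_{\max}(\mathbf S^*)/\kappa^*\ge \operatorname{tr}(\mathbf S^*)/(p\kappa^*)$. I expect the trace to be bounded below by $c(h,p)$ times the unit ordinal variances when $p_O>0$, or by the regularisation otherwise. If this fails, I would instead use $\det\le B$ combined with the bounded $\lambda_{\max}$. After back-transformation by the bounded matrix $\mathbf M$, $\hat{\boldsymbol\Sigma}^{\mathrm{MCD}}(\mathcal X_n^*)$ has eigenvalues in $[\underline{\sigma},\overline{\sigma}]$ and $\hat{\boldsymbol\mu}^{\mathrm{MCD}}(\mathcal X_n^*)$ is bounded, all uniformly in the contamination.

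\textbf{Step 3: the distance diverges.} As in the proof of Lemma \ref{lem:large-det}, $M(\hat{\mathbf x}_i^*;\mathcal X_n)\to\infty$ forces some continuous coordinate $x^*_{ij}$ to diverge, since the ordinal parts are bounded by Lemma \ref{lem:ordinalbound}. The distance with respect to $\mathcal X_n^*$ then satisfies
\[
(\hat{\mathbf x}_i^*-\hat{\boldsymbol\mu}^*)^\top(\hat{\boldsymbol\Sigma}^*)^{-1}(\hat{\mathbf x}_i^*-\hat{\boldsymbol\mu}^*)\ge \|\hat{\mathbf x}_i^*-\hat{\boldsymbol\mu}^*\|^2/\overline{\sigma}.
\]
Quantitatively, $\|\hat{\mathbf x}_i^*-\boldsymbol\mu_0\|^2\ge \lambda_{\min}(\hat{\boldsymbol\Sigma}^{\mathrm{MCD}}(\mathcal X_n))\,M$, where $\boldsymbol\mu_0=\hat{\boldsymbol\mu}^{\mathrm{MCD}}(\mathcal X_n)$. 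Combined with the bounds on $\hat{\boldsymbol\mu}^*$ and $\boldsymbol\mu_0$, this yields an explicit $\eta\ge\eta_0$ beyond which the distance exceeds $c_n$.

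\textbf{Main obstacle.} The delicate step is Step 2, specifically a lower bound on $\lambda_{\min}$, or equivalently an upper bound on $\lambda_{\max}$ of $\hat{\boldsymbol\Sigma}^{\mathrm{MCD}}(\mathcal X_n^*)$ that holds uniformly. The key point is that the included replaced points are controlled by $\eta_0$ and not only by the clean data. A second subtlety is that the theorem is about the exact minimiser in \eqref{eq:hsubsetmixed}, not the output of Algorithm \ref{code:mcdmixed}; I would state that explicitly.
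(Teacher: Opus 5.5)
Your proposal is correct and follows the same route as the paper: exclude sufficiently extreme replaced points from the determinant-minimising $h$-subset via Lemmas~\ref{lem:clean-det} and \ref{lem:large-det}, bound the resulting estimates uniformly, and then let the distance with respect to $\mathcal X_n^*$ exceed $c_n$. Your bound on retained replaced points through $M(\hat{\mathbf x}_i^*;\mathcal X_n)\le\eta_0$, and the inequality $\|\hat{\mathbf x}_i^*-\hat{\boldsymbol\mu}^{\mathrm{MCD}}(\mathcal X_n)\|^2\ge\lambda_{\min}(\hat{\boldsymbol\Sigma}^{\mathrm{MCD}}(\mathcal X_n))\,M(\hat{\mathbf x}_i^*;\mathcal X_n)$, simply make the paper's qualitative steps explicit; note that Step 3 only needs the upper bound $\overline{\sigma}$, which your compactness argument already gives, so the hedged lower-bound discussion is superfluous (and its fallback via $\det\le B$ could not bound $\lambda_{\min}$ from below anyway).
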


\begin{proof}
Let $\mathcal H^{\mathrm{MCD},*}$ be a determinant-minimising $h$-subset for $\mathcal X_n^*$. By Lemma~\ref{lem:large-det}, no replaced observation $\mathbf{x}_i^*$ satisfying $M(\hat{\mathbf{x}}_i^*;\mathcal X_n)>\eta_0$ can belong to $\mathcal H^{\mathrm{MCD},*}$.
Let
$$
\hat{\boldsymbol{\mu}}^*\defeq \hat{\boldsymbol{\mu}}^{\mathrm{MCD}}(\mathcal X_n^*),
\qquad
\hat{\boldsymbol{\Sigma}}^*\defeq \hat{\boldsymbol{\Sigma}}^{\mathrm{MCD}}(\mathcal X_n^*)
$$
be the corresponding estimates. Since $\mathcal H^{\mathrm{MCD},*}$ excludes all sufficiently extreme replaced observations, every point in the selected subset lies in a bounded region depending only on $\mathcal X_n$ and $g$; unchanged observations are bounded by Lemmas~\ref{lem:thresholds} and \ref{lem:standardisation} together with Lemma~\ref{lem:ordinalbound}, while replaced observations outside a sufficiently large bounded region are excluded by Lemma~\ref{lem:large-det}. Therefore the family of possible estimates $(\hat{\boldsymbol{\mu}}^*,\hat{\boldsymbol{\Sigma}}^*)$ is bounded and the regularisation ensures that $(\hat{\boldsymbol{\Sigma}}^*)^{-1}$ is uniformly bounded as well.
Now consider the robust Mahalanobis distance of $\mathbf{x}_i^*$ under the contaminated fit:
$$
D^*(\hat{\mathbf{x}}_i^*)
\defeq 
(\hat{\mathbf{x}}_i^*-\hat{\boldsymbol{\mu}}^*)^\top
(\hat{\boldsymbol{\Sigma}}^*)^{-1}
(\hat{\mathbf{x}}_i^*-\hat{\boldsymbol{\mu}}^*).
$$
By Lemma~\ref{lem:ordinalbound}, the ordinal part of $\hat{\mathbf{x}}_i^*$ is uniformly bounded. Hence, if
$M(\hat{\mathbf{x}}_i^*;\mathcal X_n)\to\infty$, then at least one continuous component of $\hat{\mathbf{x}}_i^*$ must become unbounded. Since $\hat{\boldsymbol{\mu}}^*$ remains bounded, it follows that $\|\hat{\mathbf{x}}_i^*-\hat{\boldsymbol{\mu}}^*\|\to\infty$. 
Because $(\hat{\boldsymbol{\Sigma}}^*)^{-1}$ is uniformly bounded and uniformly positive definite, there exists $c>0$ such that
$
\boldsymbol{v}^\top (\hat{\boldsymbol{\Sigma}}^*)^{-1}\boldsymbol{v}\ge c\|\boldsymbol{v}\|^2$ for all $\boldsymbol{v}\in\mathbb R^p$.
Therefore
$$
D^*(\hat{\mathbf{x}}_i^*)
\ge
c\|\hat{\mathbf{x}}_i^*-\hat{\boldsymbol{\mu}}^*\|^2
\to\infty
\ \text{as}\ 
M(\hat{\mathbf{x}}_i^*;\mathcal X_n)\to\infty.
$$
The outlier rule compares $D^*(\hat{\mathbf{x}}_i^*)$ with the fixed finite threshold $c_n$. Hence there exists $\eta\ge \eta_0$ such that
$$
M(\hat{\mathbf{x}}_i^*;\mathcal X_n)>\eta
\quad\Longrightarrow\quad
D^*(\hat{\mathbf{x}}_i^*)>c_n.
$$
Thus $\mathbf{x}_i^*$ is identified as an outlier with respect to $\mathcal X_n^*$.
\end{proof}

\end{document}